\documentclass[11pt]{article}

\usepackage{fullpage}


\usepackage{graphicx}

\usepackage{ifthen}
\usepackage{amsmath}
\usepackage{amsthm}
\usepackage{amssymb}
\usepackage{natbib}
\usepackage{mathtools}
\usepackage{hyperref}

\usepackage{nicefrac}

\usepackage{enumerate}

\usepackage[noabbrev,nameinlink]{cleveref}

\usepackage[dvipsnames]{xcolor}

\usepackage{tikz}

\newcommand{\ylabel}[2]{%
\begin{scope}[shift={(0,#1)}]
\draw (-.5ex,0) -- (.5ex,0);%
\draw[anchor=mid east] (-.5ex, 0) node {#2};%
\end{scope}
}

\newcommand{\xlabel}[2]{%
\begin{scope}[shift={(#1,0)}]
\draw (0, .5ex) -- (0, -.5ex); 
\draw[anchor=mid] (0, -2ex) node {#2};
\end{scope}
}



\usepackage{amsmath,amsfonts,amssymb,amsthm,latexsym}
\usepackage{graphicx}
\usepackage{color}
\usepackage{bm}
\newtheorem{theorem}{Theorem}
\newtheorem{definition}{Definition}
\newtheorem{lemma}{Lemma}
\newtheorem{prop}{Proposition}
\newtheorem{fact}{Fact}
\newtheorem{corollary}{Corollary}

\usepackage{ifthen}
\usepackage{amssymb}
\usepackage{amsfonts}

\usepackage{bbm}
\usepackage{BOONDOX-cal}


\newcommand{\composed}[3]{#1{#2{#3}}}
\newcommand{\double}[2]{\kern.15ex#1{\kern-.15ex#1{\kern-.15ex#2}}}

\newcommand{\themodifier}{}  

\newcommand{\ironed}{\bar}
\newcommand{\constrained}{\hat}
\newcommand{\random}{\check}

\newcommand{\thestage}{an}           

\newcommand{\interimfont}[1]{\mathnormal{#1}}

\newcommand{\interim}[1]{{\setstagefont{\interimfont}#1}}

\newcommand{\stagefont}{\mathrm}
\newcommand{\setstagefont}[1]{\renewcommand{\stagefont}{#1}}


\newcommand{\decoration}{\noaccents}


\newcommand{\stagemodifier}[2]{\composed%
{\ifthenelse{\equal{#2}{}}{}{\csname #2\endcsname}}%
{\ifthenelse{\equal{#1}{}}{}{\csname #1\endcsname}}%
}

\newcommand{\thestagemodifier}{\stagemodifier{\thestage}{\themodifier}}

\newcommand{\ensuredecoration}{\renewcommand{\decoration}{\thestagemodifier}}

\ensuredecoration



\newcommand{\optconstrained}{\composed{\optimized}{\constrained}}
\newcommand{\optimized}[1]{#1\opt}
\newcommand{\differentiated}[1]{#1'}
\newcommand{\tagged}[2]{{#2}^{#1}}

\newcommand{\primedarg}[1]{#1\primed}
\newcommand{\noaccents}[1]{#1}

\newcommand{\drop}[1]{}
\newcommand{\definevariant}[3]{\expandafter\newcommand\expandafter{\csname \expandafter\drop\string#1#2\endcsname}{}
\expandafter\DeclareRobustCommand\expandafter{\csname \expandafter\drop\string#1#2\endcsname}{#3}}
\newcommand{\usevariant}[2]{\csname \expandafter\drop\string#1#2\endcsname}


\newcommand{\agind}[1][\agent]{_{#1}}
\newcommand{\agith}[1][\agent]{_{(#1)}}
\newcommand{\minusagind}[1][\agent]{_{-#1}}

\newcommand{\newagentvar}[3][\decoration]{%
\definevariant{#2}{}{#1{\stagefont{#3}}}%
\definevariant{#2}{i}{#1{\stagefont{#3}}\agind}
\definevariant{#2}{ith}{#1{\stagefont{#3}}\agith}
\definevariant{#2}{s}{\kern.1ex#1{\kern-.1ex\boldsymbol{\stagefont{#3}}}}
\definevariant{#2}{smi}{\kern.1ex#1{\kern-.1ex\boldsymbol{\stagefont{#3}}}\minusagind} 
}%

\newcommand{\itind}[1][\itm]{_{#1}}
\newcommand{\minusitind}[1][\itm]{_{-#1}}

\newcommand{\newitemvar}[3][\decoration]{%
\definevariant{#2}{}{#1{\stagefont{#3}}}%
\definevariant{#2}{j}{#1{\stagefont{#3}}\itind}
\definevariant{#2}{s}{\kern.1ex#1{\kern-.1ex\boldsymbol{\stagefont{#3}}}}
\definevariant{#2}{smi}{\kern.1ex#1{\kern-.1ex\boldsymbol{\stagefont{#3}}}\minusitind} 
}%

\newcommand{\newagentvarsm}[5][]{\newagentvar[\composed{#1}{\stagemodifier{#4}{#5}}]{#2}{#3}}


\newcommand{\theval}{v}
\newagentvar{\val}{\theval}
\newagentvar[\interim]{\Val}{\theval}
\newagentvar{\type}{t}
\newagentvar{\othertype}{s}

\newagentvarsm{\rev}{R}{interim}{nomodifer}
\newagentvar[\differentiated]{\marg}{\rev}
\newagentvarsm{\rawrev}{P}{interim}{}
\newagentvar[\differentiated]{\rawmarg}{\rawrev}

\newagentvarsm{\virt}{\phi}{interim}{}
\newagentvar{\cumvirt}{\Phi}
\newagentvar{\qvirt}{\phi}
\newagentvar[\ironed]{\ivirt}{\virt}
\newagentvar[\ironed]{\icumvirt}{\cumvirt}

\newagentvarsm{\dist}{F}{interim}{}
\newagentvarsm{\dens}{f}{interim}{}
\newagentvarsm{\hazard}{h}{interim}{}
\newagentvarsm{\cumhazard}{H}{interim}{}

\newcommand{\thealloc}{x}
\newagentvar{\alloc}{\thealloc}
\newagentvar[\ironed]{\ialloc}{\thealloc}

\newagentvarsm{\zee}{z}{an}{}
\newagentvarsm{\wye}{y}{an}{}

%
%
\newcommand{\thefalloc}{z}
\newagentvar{\falloc}{\thefalloc}
\newagentvarsm{\epfalloc}{\thefalloc}{expost}{}

%
%
\newcommand{\thequant}{q}
\newagentvar{\quant}{\thequant}
\newagentvar[\constrained]{\exquant}{\thequant}
\newagentvar[\constrained]{\critquant}{\thequant}  
\newagentvar[\optconstrained]{\monoq}{\thequant}
\newagentvar[\interim]{\toquant}{\thequant}

\newcommand{\theqalloc}{y}
\newcommand{\thecumalloc}{Y}
\newagentvar{\qalloc}{\theqalloc}
\newagentvar{\cumalloc}{\thecumalloc}
\newagentvarsm{\calloc}{\theqalloc}{interim}{}
\newagentvarsm{\cumcalloc}{\thecumalloc}{interim}{}
\newagentvarsm{\iqalloc}{\theqalloc}{interim}{ironed}
\newagentvarsm{\icumalloc}{\thecumalloc}{interim}{ironed}


\newagentvar{\qprice}{\price}
\newagentvar{\qrev}{R}
\newagentvar[\ironed]{\iqrev}{\qrev}

\newagentvarsm{\excalloc}{\theqalloc}{expost}{constrained}
\newagentvarsm{\exalloc}{\theqalloc}{expost}{forquantile}
\newagentvarsm{\extalloc}{\thealloc}{expost}{forvalue}
\newagentvarsm{\exfalloc}{\thefalloc}{expost}{forvalue}

\newagentvar[\primedarg]{\inducedrev}{\rev}

\newagentvar[\tilde]{\pseudorev}{\rev}
\newagentvar[\tilde]{\pseudorawrev}{\rawrev}

%
%

\newagentvar{\typespace}{{\cal T}}
\newagentvar{\typesubspace}{S}

\newagentvar{\outcome}{w}
\newagentvar{\outcomespace}{{\cal W}}

\newcommand{\served}[1]{#1^1}
\newcommand{\nonserved}[1]{#1^0}
\newcommand{\alloced}[1]{#1^{\alloc}}
\newcommand{\allocedi}[1]{#1^{\alloci}}
\newagentvar[\alloced]{\xoutcome}{\outcome}
\newagentvar[\allocedi]{\xioutcome}{\outcome}
\newagentvar[\served]{\soutcome}{\outcome}

\newagentvar[\nonserved]{\nsoutcome}{\outcome}

\newcommand{\theprice}{p}
\newagentvar{\price}{\theprice}

\newcommand{\thepay}{p}
\newagentvarsm{\pay}{\thepay}{an}{}
\newagentvarsm{\vpay}{\thepay}{interim}{forval}
\newagentvarsm{\epvpay}{\thepay}{expost}{forval}
\newagentvarsm{\talloc}{\thealloc}{interim}{forval}
\newagentvarsm{\valloc}{\thealloc}{interim}{forval}
\newagentvarsm{\eptalloc}{\thealloc}{expost}{forval}
\newagentvarsm{\epvalloc}{\thealloc}{expost}{forval}

\newagentvarsm{\eval}{\theval}{an}{constrained}
\newcommand{\thedalloc}{y}
\newagentvarsm{\dalloc}{\thedalloc}{interim}{forbid}
\newagentvarsm{\ealloc}{\thedalloc}{interim}{forval}
\newcommand{\thedstrat}{c}
\newagentvarsm{\dstrat}{\thedstrat}{interim}{forval}
\newcommand{\theepay}{q}
\newagentvarsm{\epay}{\theepay}{interim}{forval}

\newagentvarsm{\balloc}{\thealloc}{interim}{forbid}
\newagentvarsm{\bpay}{\thepay}{interim}{forbid}
\newagentvarsm{\epballoc}{\thealloc}{expost}{forbid}
\newagentvarsm{\epbpay}{\thepay}{expost}{forbid}

\newagentvar{\act}{a}
\newagentvar{\bidspace}{A}
\newagentvar{\actspace}{A}

\newagentvarsm{\critval}{\theval}{an}{constrained}
\newagentvarsm{\epvcritval}{\theval}{expost}{constrained}

\newagentvar[\constrained]{\crittype}{\type}
\newagentvar[\constrained]{\critvirt}{\virt}
\newagentvar[\constrained]{\reserve}{\val} 
\newagentvar[\constrained]{\bidreserve}{\bid} 
\newagentvar[\optconstrained]{\monop}{\val}
\newagentvar[\constrained]{\monot}{\type}
\newagentvar[\optimized]{\monorev}{\rev}


\newagentvar{\rcalloc}{y}
\newagentvar[\optimized]{\optrcalloc}{\rcalloc}
\newagentvar{\biddist}{G}
\newagentvarsm{\critbid}{\bid}{an}{constrained}
\newagentvar[\constrained]{\cbid}{B}
\newagentvar[\primedarg]{\wbid}{\bid}
\newagentvar{\gfunc}{\vartheta}

\newagentvarsm{\util}{u}{an}{}
\newagentvarsm{\vutil}{u}{interim}{forvalue}


\newcommand{\thebid}{b}
\newagentvar{\bid}{\thebid}
\newagentvarsm{\strat}{\thebid}{interim}{}

\newagentvar[\tagged{\text{SD}}]{\sdvirt}{\virt}
\newagentvar[\composed{\tagged{\text{SD}}}{\ironed}]{\sdivirt}{\virt}
\newagentvar[\tagged{\text{MD}}]{\mdvirt}{\virt}
\newagentvar[\composed{\tagged{\text{MD}}}{\ironed}]{\mdivirt}{\virt}

\newagentvar[\ironed]{\iprice}{\price}  
\newagentvar[\ironed]{\ival}{\val}  

\newagentvar{\ints}{{\cal I}}

\newagentvar{\wal}{w}

%
%


\newitemvar{\pos}{j}
\newitemvar{\weight}{w}
\newitemvar[\differentiated]{\mweight}{\weight}
\newitemvar{\udtype}{\type}
\newitemvar[\constrained]{\udcrittype}{\type}
\newitemvar{\udalloc}{\alloc}
\newitemvar{\udprice}{\price}
\newitemvar[\constrained]{\udcalloc}{\qalloc}
\newitemvar[\constrained]{\udcumcalloc}{\cumalloc}

\newagentvar{\mech}{{\cal M}}
\newagentvar[\skew{5}{\hat}]{\cmech}{{\cal M}}
\newagentvar{\alg}{{\cal A}}

\newagentvarsm{\rawprice}{\theprice}{expost}{forbid}
\newagentvarsm{\rawalloc}{\thealloc}{expost}{forbid}

\newcommand{\reals}{{\mathbb R}}

\newagentvar{\trans}{\sigma}

\newagentvar{\demandset}{S}

\newcommand{\opt}{^{\star}}
\newcommand{\primed}{^\dagger}

\newagentvar{distout}{w}
\newagentvar{idistout}{\bar{w}}

%
%









\newagentvar{\gap}{\delta}

\newagentvar[\tilde]{\ualloc}{\alloc}
\newagentvar[\tilde]{\uprice}{\price}

\newagentvarsm{\cumbal}{L}{an}{none}
\newagentvarsm{\maxbal}{d}{an}{ironed}
\newagentvarsm{\newbal}{d}{interim}{none}
\newagentvarsm{\rebal}{l}{interim}{none}
\newagentvarsm{\indnewbal}{d}{an}{random}

\newagentvarsm{\minalloc}{\iota}{none}{none}
\newagentvarsm{\rebalrate}{\eta}{an}{none}
\newcommand{\numrebalstages}{\random{\tau}}
\newcommand{\rebalstages}{\random{T}}

\newcommand{\insrate}{\rho}

\newcommand{\theinsalloc}{w}
\newcommand{\theinspay}{r}
\newagentvarsm{\insalloc}{\theinsalloc}{expost}{forval}
\newagentvarsm{\insvalloc}{\theinsalloc}{interim}{forval}
\newagentvarsm{\inspay}{\theinspay}{interim}{forval}
\newagentvarsm{\indinsalloc}{\theinsalloc}{an}{random}
\newagentvarsm{\indbelow}{\pi}{an}{random}
\newagentvarsm{\indalloc}{\thealloc}{an}{random}
\newagentvarsm{\indinspay}{\theinspay}{an}{random}
\newagentvarsm{\indval}{\theval}{an}{random}
\newagentvarsm{\indcumbal}{\cumbal}{an}{random}
\newagentvarsm{\indrebal}{\rebal}{an}{random}

\newagentvar{\inderror}{\omega}

\newagentvarsm{\empalloc}{x}{interim}{empirical}
\newagentvarsm{\empinsalloc}{w}{interim}{empirical}
\newagentvarsm{\empavg}{\mu}{none}{empirical}

\newagentvar[\underline]{\vmin}{\val}
\newagentvar[\bar]{\vmax}{\val}

\newcommand{\Ind}[1]{\mathbf{1}[#1]}

\newcommand{\setsize}[1]{{\left|#1\right|}}

\DeclareMathOperator{\argmax}{argmax}

\newcommand{\super}[1]{^{(#1)}}

%
%
\newcommand{\given}{\,\mid\,}

\newcommand{\prob}[2][]{\text{\bf Pr}\ifthenelse{\not\equal{}{#1}}{_{#1}}{}\!\left[{\def\givenn{\middle|}#2}\right]}
\newcommand{\expect}[2][]{\text{\bf E}\ifthenelse{\not\equal{}{#1}}{_{#1}}{}\!\left[{\def\givenn{\middle|}#2}\right]}

\newcommand{\tparen}{\big}
\newcommand{\tprob}[2][]{\text{\bf Pr}\ifthenelse{\not\equal{}{#1}}{_{#1}}{}\tparen[{\def\given{\tparen|}#2}\tparen]}
\newcommand{\texpect}[2][]{\text{\bf E}\ifthenelse{\not\equal{}{#1}}{_{#1}}{}\tparen[{\def\given{\tparen|}#2}\tparen]}

\newcommand{\sprob}[2][]{\text{\bf Pr}\ifthenelse{\not\equal{}{#1}}{_{#1}}{}[#2]}
\newcommand{\sexpect}[2][]{\text{\bf E}\ifthenelse{\not\equal{}{#1}}{_{#1}}{}[#2]}

\newcommand{\dd}{{\mathrm d}}




\begin{document}



\title{Dashboard Mechanisms for Online Marketplaces\footnote{A two-page abstract appeared in the 2019 ACM Conference on Economics and Computation.  The authors wish to thank Chenhao Zhang for the observation of \Cref{l:p0}.}}
\date{}

\newcommand{\email}[1]{\href{mailto:#1}{#1}}

\author{Jason Hartline\thanks{Northwestern U., Evanston IL.  Work done in part while authors were visiting Booking.com and in part while supported by NSF CCF 1618502. Email: \email{hartline@northwestern.edu}, \email{aleckjohnsen2012@u.northwestern.edu}}
\and Aleck Johnsen\footnotemark[2]
\and Denis Nekipelov\thanks{U. of Virginia, Charlottesville, VA.  Work done in part while author was visiting Booking.com and in part while supported by NSF CCF 1563708.  Email: \email{denis@virginia.edu}}  
\and Onno Zoeter\thanks{Booking.com, Amsterdam, Netherlands.  Email: \email{onno.zoeter@booking.com}}}

\maketitle

\begin{abstract}
This paper gives a theoretical model for design and analysis of
mechanisms for online marketplaces where a bidding dashboard enables
the bid-optimization of long-lived agents.  We assume that a good
allocation algorithm exists when given the true values of the agents
and we develop online winner-pays-bid and all-pay mechanisms that
implement the same outcome of the algorithm with the aid of a bidding
dashboard.  The bidding dashboards that we develop work in conjunction
with the mechanism to guarantee that bidding according to the
dashboard is strategically equivalent (with vanishing utility
difference) to bidding truthfully in the sequential truthful
implementation of the allocation algorithm.  Our dashboard mechanism
makes only a single call to the allocation algorithm in each stage.
\end{abstract}

\section{Introduction}
\label{s:intro}

%
%
This paper formalizes a theoretical study of bidding dashboards for the
design of online markets.  In these markets short-lived users arrive
frequently and are matched with long-lived agents.  Rarely in practice
do market mechanisms, which allow the agents to bid to optimize
their matching with the users, have truthtelling as an equilibrium
strategy.\footnote{Such a matching is often the combination of
  a ranking of the agents by the market mechanism and the users'
  choice behavior.  Examples of the agents in such mechanisms include,
  sellers posting a buy-it-now price on eBay, hoteliers choosing a
  commission percentage on Booking.com, or advertisers bidding in
  Google's ad auction.  The first-two-examples are ``winner pays bid''
  for the agents; none of these examples have truthtelling as an
  equilibrium. Though the Google Ad auction is not winner-pays-bid,
  its bidding dashboard displays clicks versus cost-per-click which is
  equivalent to a winner-pays-bid mechanism.}  
The non-truthfulness of
mechanisms for these markets is frequently the result of practical
constraints; this paper is part of a small but growing literature on
the design of these non-truthful mechanisms.
By formally modeling bidding dashboards, which are common
in these markets, this paper gives simple and practical approach to
non-truthful mechanism design.

%
%
Two key challenges for the design of non-truthful mechanisms are (i)
assisting the agents to find good strategies in these mechanisms and
(ii) identifying mechanisms that have good outcomes when agents find
good strategies.  Bidding dashboards, which provide agents with market
information relevant for bid optimization and are common in online
marketplaces,\footnote{For example, Google provides a dashboard that
  forecasts click volume and cost per click as a function of
  advertisers' bids and Booking.com provides a visibility booster that
  forecasts percentage of increased clicks relative to base commission
  as a function of the commission percentage offered by the hotel for
  bookings.} can provide a solution to both issues.  This paper gives
a formal study of dashboard mechanisms and identifies dashboards and the
accompanying mechanism that lead agents to find good strategies and
in which good strategies result in good outcomes.

%
%
A novel technical feature of the mechanisms proposed by this paper is
the use of econometric inference to infer the preferences of the
agents in a way that permits the mechanism to then allocate nearly
optimally for the inferred preferences.  To illustrate this idea,
consider the Nash equilibrium of the winner-pays-bid mechanism that
allocates a single item to one of two agents with probability
proportional to their bids, i.e., an agent wins with probability equal
to her bid divided by the sum of bids, cf.\@ the proportional share
mechanism \citep[e.g.,][]{JT-04}.  Fixing the bid of agent~2, and
assuming agent 1's bid is in best response, the standard
characterization of equilibria in auctions allows the inversion of
agent 1's bid-optimization problem and identification of agent 1's
value.\footnote{The bid-allocation rule of agent~1 is
  $\balloci[1](\bid) = \bid / (\bid + \bidi[2])$.  Agent~1 with value
  $\vali[1]$ chooses $\bidi[1] = \argmax_{\bid} (\vali[1] - \bid)\,
  \balloci[1](\bid)$; this bid satisfies $\vali[1] = \bidi[1] +
  \balloci[1](\bidi[1]) / \balloci[1]'(\bidi[1]) = 2\bidi[1] +
  \bidi[1]^2/\bidi[2]$.  For example, if $\bidi[1] = 1$ and $\bidi[2]
  = 2$, we can infer that agent~1's value is $\vali[1] = 5/2$.}  The
same holds for agent~2.  Thus, in hindsight the principal knows both
agents' values. The difficulty, which this paper addresses, of using
this idea in a mechanism is the potential circular reasoning that
results from attempting to allocate efficiently given these inferred
values.

%
%
The paper considers general environments for single-dimensional
agents, i.e., with preferences given by a value for service and
utility given by value times service probability minus payment.  The
main goal of the paper is to convert an allocation algorithm, which
maps the values of the agents to the set of agents that are served,
into a mechanism that implements the same outcome but has
a winner-pays-bid or all-pay payment format.  In winner-pays-bid mechanisms the
agents report bids, the mechanism selects a set of agents to win, and
all winners pay their bids.  In all-pay mechanisms the agents report
bids, the mechanism selects a set of agents to win, and all agents pay
their bids.  For practical reasons, many markets are winner-pays-bid
or all-pay.\footnote{All-pay mechanisms correspond to markets were
  agents pay monthly for service and the service level depends on the
  total payment which is fixed in advance by the agent's bid.}

From a theoretical point of view, the focus on winner-pays-bid and
all-pay mechanisms is interesting because the prior literature has not
identified good winner-pays-bid mechanisms for general environments
(with the exception of \citealp{HT-16}, to be discussed with related
work). For example, the natural highest-bids-win winner-pays-bid
mechanism for the canonical environment of single-minded combinatorial
auctions does not exhibit equilibria that are close to optimal (even
for special cases where winner determination is computationally
tractable; see~\citealp{LB-10}, and \citealp{DK-15}).

%
%
The motivating marketplaces for this work are ones were short-lived
users perform a search or collection of searches in order to identify
one or more long-lived agents with whom to transact.  The principal
facilitates the user search by, for example, ranking the agents by a
combination of perceived match quality for the user and the agents'
willingness to pay for the match, henceforth, the agents'
valuations. Strategically, the agents are in competition with each
other for matching with the user.  Viewing the user as a random draw
from a population of users with varying tastes results in a stochastic
allocation algorithm which maps the valuation profile of agents to the
probabilities that a user selects each agent.  This stochastic
allocation algorithm is typically monotone in each agent's value and
continuous \citep[e.g.,][]{AN-10}.

%
%
A dashboard gives the agent information about market conditions that
enables bid optimization.  Without loss in our setting, the dashboard
is the agent's predicted bid allocation rule.  Our
dashboard mechanism assumes that the bids are best response to the
published dashboard, i.e., they ``follow the dashboard'', and uses
econometric inference to invert the profile of bids to get a profile
of values.  The dashboard mechanism for a given dashboard and
allocation algorithm is as follows:
\begin{enumerate}
\item[0.] Publish the dashboard bid allocation rule to each agent and solicit each agent's bid.
\item \label{step:invert-bids-with-signaled-allocation} Invert the bid
  of each agent, assuming it is in best response to the published
  dashboard, to obtain an inferred value for each agent.
\item \label{step:execute} Execute the allocation algorithm on the
  inferred values to determine the outcome; charge each winner her bid
  (winner-pays-bid) or all agents their bids (all-pay).
\end{enumerate}

Our main result is to identify dashboards where following the
dashboard is not an assumption but near optimal strategy for the agents.  For online
marketplaces, we define winner-pays-bid and all-pay dashboards for any
sequence of allocation algorithms where following the dashboard for a
sequence of values is approximately strategically equivalent to
bidding those values in the sequential truthful mechanism
corresponding to the sequence of allocation algorithms.  Thus, there
is an approximate correspondence between all equilibria.  The notion
of approximate strategic equivalence is that the allocations are
identical and the differences in payments vanish with the number of
stages.

Our dashboards can be implemented in the (blackbox) single-call model
for mechanism design.  In this model, an algorithm has been developed
that is monotonic in each agent's value and obtains a good outcome.
The mechanism's only access to the allocation algorithm is by live
execution, i.e., where the outcome of the algorithm is implemented.
\citet{BKS-10} introduced this model and showed that truthful
mechanisms can be single-call implemented.  We show that
winner-pays-bid and all-pay dashboard mechanisms can be single-call
implemented in online marketplaces.

The above approach to the design of online markets is simple and
practical.  Though we have specified the mechanism and framework for
winner-pays-bid and all-pay mechanisms and single-dimensional agents,
the approach naturally generalizes to multi-dimensional environments
and other kinds of mechanisms.  It requires only that, from bids that
are optimized for the dashboard, the preferences of the agents can be
inferred.  Moreover, payment formats can be mixed-and-matched across
different agents within the same mechanism.

\paragraph{Related Work.}

%
%

This paper is on non-revelation mechanism design as it applies to
online markets.  The goals of this paper -- in providing foundations
for non-revelation mechanism design -- are closest to those of
\citet{HT-16,HT-19}.  Both papers study iterated environments and the design
of mechanisms with welfare that is arbitrarily close to the optimal
welfare under general constraints.  (\citeauthor{HT-16} additionally
consider the objective of maximizing revenue.)  The equilibrium
concept of \citeauthor{HT-19} is Bayes-Nash equilibrium and it is
assumed that agents are short-lived.  These assumptions are not
appropriate for the design of online markets considered in this paper,
where agents are long-lived and possibly have persistent values.

There have been extensive studies of the price of anarchy of ad
auctions, e.g., on the Google search
engine \citep{LT-10,CKKK-11,ST-13}.  A conclusion of these studies is
that in standard models of equilibrium the industry-standard
generalized second price auction can have welfare that is a constant
factor worse than the optimal welfare.  In these auctions, dashboards
can also be used to infer agent values and the mapping from bids to
values can be used in place of the ``quality score'' that is already in
use in these mechanisms.  Dashboard-based mechanisms like the ones
we construct could obtain more efficient outcomes in these markets.

The approach of the main positive result of the paper is to link
payments across successive stages in the sequential mechanism.
Linking otherwise independent decisions has proven to be useful in
overcoming difficulties in many areas of mechanism
design.  \citet{JS-07} show that linking decisions in i.i.d.\
sequential bilateral exchange can bypass the \citet{MS-83}
impossibility result. \citet{GBI-17} shows that good mechanisms
without money can be designed in a sequential environment by endowing
agents with an artificial currency and running the sequential truthful
mechanism with money. The endowment of artificial currency effectively
links decisions across stages. \citet{arm-99} shows that selling
independent items in a bundle can raise higher revenue than selling
separately, cf.\ \citet{BILW-14}.  The linking decisions idea is also
in play in the previously discussed work on Bayes-Nash mechanism
design by \citet{HT-19}.

The results of this paper can also be viewed as giving a blackbox
reduction from truthful mechanism design to algorithm design.  For
single-dimensinal agents, a Bayes-Nash reduction from mechanism design
to non-monotone algorithm design was given by \citet{HL-10,HL-15}.
Multi-dimensional reductions for addressing non-monotonicity were
studied by \citet{HKM-15}; \citet{BH-11}; and \citet{DHKN-17}.
Blackbox reductions of revenue maximization to welfare maximization
were studied by \citet{CDW-13a,CDW-13b}.  Single-call reductions,
which reduce truthful mechanism design to algorithm design with a
single blackbox call to the monotone algorithm were introduced
by \citet{BKS-10} and further studied by \citet{WS-15}.  Our work
extends this last area by giving single-call reductions to monotone
algorithms for winner-pays-bid and all-pay mechanisms.

\paragraph{Organization.} 
The rest of this paper is organized as follows.
\Cref{s:prelim} formalizes the multi-agent and sequential environment for mechanism design. 
\Cref{s:single-agent} reviews single-agent auction theory.  In particular, it shows that implementing winner-pays-bid and all-pay mechanisms for a single agent is straightforward.  
\Cref{s:dashboards} gives the basic design of a dashboard
mechanism.
\Cref{s:inferred-values-dashboard} gives a
dashboard and proves that, in static environments, following the
dashboard converges to Nash equilibrium and this equilibrium
implements the algorithm's desired allocation.
\Cref{s:rebalancing} 
defines our dynamic solution concept for dashboard mechanisms, namely
approximate strategic equivalence with the sequential truthful
mechanism.  In this section we describe a payment rebalancing
dashboard that guarantees this approximate strategic equivalence.
\Cref{s:minimal-rebalancing} shows that there are natural
all-pay and winner-pays-bid dashboards for which following the dashboard with a static
value results in low outstanding balance, even in a dynamic
environment and without explicit rebalancing.
Finally, \Cref{s:blackbox}
generalizes the dashboards to environments where the mechanism only
has access to the allocation algorithm by single blackbox call.

\section{The Sequential Multi-agent Model}
\label{s:prelim}

This paper considers general environments for single-dimensional
linear agents.  There are $n$ agents $\{1,\ldots,n\}$.  A profile of
$n$ agent values is denoted $\vals = (\vali[1],\ldots,\vali[n])$.  A
multi-agent allocation algorithm is $\epvallocs : \reals^n \to
[0,1]^n$.  We assume that any constraints of the environment are
satisfied by the algorithm and will not otherwise be notating these
constraints.  A multi-agent non-truthful mechanism is denoted
$(\epballocs,\epbpays)$ with bid allocation rule $\epballocs :
\reals^n \to [0,1]^n$ and $\epbpays : \reals^n \to \reals^n$.  The
agents have linear utility, for example, the utility of agent $i$ on
bids $\bids$ in $(\epballocs,\epbpays)$ is $\vali\,\epballoci(\bids) -
\epbpayi(\bids)$.

We consider agents interacting sequentially in a mechanism over many
stages.  For stage $s$ in $\{1,\ldots,t\}$, the environment is given
by stage values $\vals \super s$ and a stage allocation algorithm
$\epvallocs \super s$.  The designer chooses a stage mechanism
$(\epballocs \super s, \epbpays \super s)$ and the agents choose stage
bids $\bids \super s$.  The designer may choose the stage mechanism
based on realized information and outcomes from previous stages. 

We consider dynamic environments were the values and allocation
algorithm can be distinct across stages and static environments where
the values and the allocation algorithm are the same in every stage.
In static environments both the designer may change the mechanism and
the agents may change their bids from stage to stage.

An allocation algorithm takes several forms which will be notated
distinctly.  The ex post allocation algorithm is $\epvallocs :
\reals^n \to [0,1]^n$.  The ex post allocation for agent $i$ is
$\epvalloci : \reals^n \to [0,1]$.  Denote the profile with agent
$i$'s value replaced with $\zee$ is $(\zee,\valsmi) =
(\vali[1],\ldots,\vali[i-1],\zee,\vali[i+1],\ldots,\vali[n])$.  Fixing
the other agent values $\valsmi$, the projection of the ex post
allocation algorithm is single-agent allocation rule $\valloci(\zee) =
\epvalloci(\zee, \valsmi)$.  An allocation which could be output from
$\epvallocs$ is $\allocs$; agent $i$'s allocation is $\alloci$.
Similar notation is adopted for other quantities that pertain to
agents.  For example the single-agent strategy function that maps agent
$i$'s value to a bid is denoted by $\strati$ while a bid is denoted by
$\bidi$.

\section{Single-Agent Implementation}
\label{s:single-agent}

The multi-agent dashboard mechanisms of this paper will be based on a
straightforward construction of winner-pays-bid and all-pay mechanisms
for a single agent.  The agent has a single-dimensional value $\val$
and linear utility $\util = \val\,\alloc - \price$ for allocation
probability $\alloc$ and expected payment $\price$.  Given a stochastic
allocation algorithm $\valloc : \reals \to [0,1]$, which specifies a
desired mapping from the agent's value to an allocation, we wish to
implement $\valloc$ when the agent behaves strategically in a
mechanism with a winner-pays-bid or all-pay payment format.

A single-agent mechanism $(\balloc,\bpay)$ maps a bid $\bid$ to an
allocation probability $\balloc(\bid)$ and payment $\bpay(\bid)$.
Winner-pays-bid and all-pay mechanisms specify a bid allocation rule
$\balloc$ with bid payment rule $\bpay$, respectively, defined as
\begin{align}
\label{eq:bid-payment-format}
\bpay(\bid) &= \bid\,\balloc(\bid), &
\bpay(\bid) &= \bid.
\end{align}

Strategic behavior of an agent in such a mechanism is governed by the
$n=1$ agent special case of the characterization of Bayes-Nash
equilibrium of \citet{mye-81}.  See \Cref{f:myerson} for an
illustration of the geometry of the payment identity.

\begin{figure}[t]
\begin{center}
\begin{tikzpicture}

\draw (1.25,1.2) node {payment};

\coordinate (xofv) at (1.3,.5);

\begin{scope}
\clip (0,0) rectangle (xofv);

\fill [orange] (0,0) -- plot [smooth] coordinates {(.3,0) (.5,.2) (.7,.22) (1,.25) (1.2,.33) (xofv) (1.5,.6) (1.6,.62) (1.7,.64) (1.8,.8) (1.9, .85) (1.98,.98) (2,1) } -- (0,1) -- cycle;
\end{scope}

\draw [thick] (0,0) -- plot [smooth] coordinates {(.3,0) (.5,.2) (.7,.22) (1,.25) (1.2,.33) (xofv) (1.5,.6) (1.6,.62) (1.7,.64) (1.8,.8) (1.9, .85) (1.98,.98) (2,1) } -- (2.5,1);


\xlabel{1.3}{$\val$}
\ylabel{.5}{$\valloc(\val)$}
\xlabel{0}{$0$}
\ylabel{0}{$0$}
\ylabel{1}{$1$}

\draw (0,1) -- (0,0) -- (2.5,0);

\end{tikzpicture}
\begin{tikzpicture}

\draw (1.25,1.2) node {surplus};

\coordinate (xofv) at (1.3,.5);

\fill [orange] (0,0) rectangle (xofv);

\draw [thick] (0,0) -- plot [smooth] coordinates {(.3,0) (.5,.2) (.7,.22) (1,.25) (1.2,.33) (xofv) (1.5,.6) (1.6,.62) (1.7,.64) (1.8,.8) (1.9, .85) (1.98,.98) (2,1) } -- (2.5,1);


\xlabel{1.3}{$\val$}
\ylabel{.5}{$\valloc(\val)$}
\xlabel{0}{$0$}
\ylabel{0}{$0$}
\ylabel{1}{$1$}

\draw (0,1) -- (0,0) -- (2.5,0);

\end{tikzpicture}
\begin{tikzpicture}

\draw (1.25,1.2) node {utility};

\coordinate (xofv) at (1.3,.5);

\begin{scope}
\clip (0,0) rectangle (xofv);

\fill [cyan] plot [smooth] coordinates {(.3,0) (.5,.2) (.7,.22) (1,.25) (1.2,.33) (xofv) (1.5,.6) (1.6,.62) (1.7,.64) (1.8,.8) (1.9, .85) (1.98,.98) (2,1) } -- (2.5,1) -- (2.5,0) -- cycle;
\end{scope}

\draw [thick] (0,0) -- plot [smooth] coordinates {(.3,0) (.5,.2) (.7,.22) (1,.25) (1.2,.33) (xofv) (1.5,.6) (1.6,.62) (1.7,.64) (1.8,.8) (1.9, .85) (1.98,.98) (2,1) } -- (2.5,1);


\xlabel{1.3}{$\val$}
\ylabel{.5}{$\valloc(\val)$}
\xlabel{0}{$0$}
\ylabel{0}{$0$}
\ylabel{1}{$1$}

\draw (0,1) -- (0,0) -- (2.5,0);

\end{tikzpicture}
\caption{\label{f:myerson} The terms in the payment identity are
  depicted with $\price(0) = 0$.  The payment of the agent is equal to
  surplus $\val\,\valloc(\val)$ minus the utility of the agent
  $\int_0^{\val} \valloc(\zee)\,\dd \zee$.}
\end{center}
\end{figure}
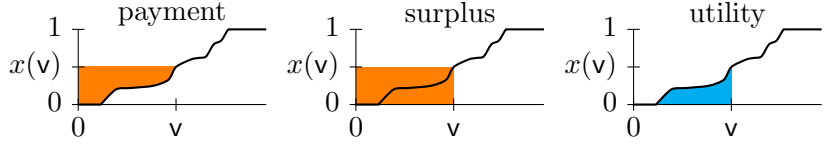

\begin{theorem}[A special case of \citealp{mye-81}]
\label{t:myerson}
Bidding strategy $\strat: \reals \to \reals$ is a best response for
mechanism $(\balloc,\bpay)$ if and only if its induced
allocation and payment rules $\valloc(\cdot) = \balloc(\strat(\cdot))$
and $\vpay(\cdot) = \balloc(\strat(\cdot))$ satisfy:
\begin{enumerate}[(a)]
\item monotonicity: $\valloc(\cdot)$ is monotonically non-decreasing;
\item payment identity: 
\begin{align}
\label{eq:payment-identity}
\bpay(\val) 
   &= \val\,\valloc(\val) - \int_0^{\val} \valloc(\zee) \, \dd \zee 
   + \vpay(0);
\end{align}
\end{enumerate}
and, for any value of the agent, any bid not in the range of the
bidding strategy is dominated by a bid in its range.
\end{theorem}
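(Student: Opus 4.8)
The plan is to unpack the definition of best response: the strategy $\strat$ is a best response precisely when, for every value $\val$, the prescribed bid $\strat(\val)$ maximizes $\val\,\balloc(\bid)-\bpay(\bid)$ over all bids $\bid\in\reals$. I would split the candidate deviations into bids lying in the range of $\strat$ and bids outside it. A bid in the range equals $\strat(\val')$ for some $\val'$ and delivers utility $\val\,\valloc(\val')-\vpay(\val')$ to an agent of value $\val$, where $\valloc(\cdot)=\balloc(\strat(\cdot))$ and $\vpay(\cdot)=\bpay(\strat(\cdot))$ are the induced rules. Hence the assertion ``$\strat(\val)$ beats every in-range bid, for every $\val$'' is exactly incentive compatibility of the induced direct mechanism $(\valloc,\vpay)$, namely $\val\,\valloc(\val)-\vpay(\val)\ge \val\,\valloc(\val')-\vpay(\val')$ for all $\val,\val'$; whether out-of-range bids can do better is governed entirely by the final clause. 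The theorem therefore separates into two pieces: (i) in-range optimality is equivalent to (a) and (b), and (ii) the final clause reconciles out-of-range bids with in-range ones.

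The core of the proof is the single-agent Myerson equivalence between incentive compatibility of $(\valloc,\vpay)$ and conditions (a)--(b). For the forward direction I would add the two incentive constraints with the roles of $\val$ and $\val'$ interchanged to obtain $(\val-\val')\bigl(\valloc(\val)-\valloc(\val')\bigr)\ge 0$, which is monotonicity (a). For the payment identity (b) I would study the indirect utility $\util(\val)=\max_{\val'}\bigl[\val\,\valloc(\val')-\vpay(\val')\bigr]$; as a pointwise maximum of affine functions of $\val$ it is convex, with the envelope relation $\util'(\val)=\valloc(\val)$ at points of differentiability, so that $\util(\val)=\util(0)+\int_0^{\val}\valloc(\zee)\,\dd\zee$. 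Substituting $\util(\val)=\val\,\valloc(\val)-\vpay(\val)$ and $\util(0)=-\vpay(0)$ rearranges to the stated payment identity for $\vpay$. For the converse I would assume (a)--(b), substitute the payment identity into the utility gain of bidding truthfully over mimicking $\val'$, and reduce it to $\int_{\val'}^{\val}\bigl(\valloc(\zee)-\valloc(\val')\bigr)\,\dd\zee$, which monotonicity of $\valloc$ makes nonnegative whether $\val>\val'$ or $\val<\val'$; this reestablishes incentive compatibility.

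It remains to stitch in the out-of-range bids, which is where the final clause does its work. For the direction $(a),(b),\text{clause}\Rightarrow$ best response, the clause states that at each value every out-of-range bid is weakly dominated by some in-range bid, and in-range optimality (from (a)--(b)) says every in-range bid is weakly dominated by $\strat(\val)$; composing these, $\strat(\val)$ beats all bids, so $\strat$ is a best response. For the reverse direction, in-range optimality and hence (a)--(b) are immediate from best response, and the clause holds trivially because, at any value $\val$, the in-range bid $\strat(\val)$ already weakly dominates every other bid. The step I expect to be most delicate is the envelope/integration argument for the payment identity: $\util$ is only guaranteed convex, hence differentiable merely almost everywhere, so I must invoke absolute continuity on compact intervals to recover $\util$ as the integral of $\valloc$ rather than assume pointwise differentiability, and I should check that the constant of integration is exactly $\vpay(0)$. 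I would also keep all arguments phrased in terms of the induced pair $(\valloc,\vpay)$ and the set of achievable utilities, since $\strat$ itself need be neither injective nor continuous.
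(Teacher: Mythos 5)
Your proof is correct, but note that the paper never proves this theorem at all: it is stated as a quoted special case of Myerson (1981), with only the geometric illustration in \Cref{f:myerson} (payment $=$ surplus minus utility area) standing in for the payment-identity content. What you have written is the standard textbook argument, and it is sound in all the places that matter. Your decomposition is the right one: in-range deviations reduce to incentive compatibility of the induced direct pair $(\valloc,\vpay)$, which you correctly note is insensitive to $\strat$ being non-injective or discontinuous since utility depends only on the bid; the two-constraint summation gives monotonicity; and the converse computation $\int_{\val'}^{\val}\bigl(\valloc(\zee)-\valloc(\val')\bigr)\,\dd\zee \geq 0$ is exactly right in both orderings of $\val$ and $\val'$. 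You also correctly flag the one genuinely delicate step, which many writeups elide: the indirect utility is convex hence differentiable only off a countable set, and you need absolute continuity (plus the fact that $\valloc(\val)$ is a subgradient at every $\val$, so it agrees with $\util'$ wherever the latter exists) to recover $\util(\val)=\util(0)+\int_0^{\val}\valloc(\zee)\,\dd\zee$ with $\util(0)=-\vpay(0)$. Your handling of the final clause is also the intended reading: it is part of the equivalence, trivially implied by best response (since $\strat(\val)$ dominates everything) and, in the other direction, composed with in-range optimality to cover all of $\reals$. One small service your proof performs implicitly: the paper's statement contains typos --- the induced payment rule should read $\vpay(\cdot)=\bpay(\strat(\cdot))$, not $\balloc(\strat(\cdot))$, and the left-hand side of equation~\eqref{eq:payment-identity} should be $\vpay(\val)$ rather than $\bpay(\val)$ --- and your argument silently uses the corrected versions, which is what makes the envelope substitution go through.
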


A simple consequence of \Cref{t:myerson} is that with a monotone
allocation algorithm $\valloc$ the mechanism $(\valloc,\vpay)$, with
$\vpay$ given by the payment identity, is {\em truthful}, i.e., bidding $\strat(\val) = \val$ is an optimal
strategy for the agent.  We will refer to $(\valloc,\vpay)$ with
$\vpay$ satisfying the payment identity as the truthful
implementation of $\valloc$.  Often in the literature the payment that
the agent makes with value $\val=0$ is $\vpay(0) = 0$.

The payment identity of \Cref{t:myerson} also allows the
winner-pays-bid and all-pay mechanisms that implement a monotone
allocation algorithm $\valloc$ to be derived.  In
equations~\eqref{eq:single-agent-strat}
and~\eqref{eq:single-agent-bid-allocation-rule} these derivations are
given with winner-pays-bid on the left and all-pay on the right.
Specifically, the payment identity \eqref{eq:payment-identity} and the
formulae for payments from the payment format
\eqref{eq:bid-payment-format} give two equations for payments from
which the bid strategy can be derived:
\begin{align}
\label{eq:single-agent-strat}
\strat(\val)&= \val - \frac{1}{\valloc(\val)} \int_0^{\val}
\valloc(\zee)\,\dd \zee + \frac{\vpay(0)}{\valloc(\val)}, & \strat(\val) &= \val\, \valloc(\val)-\int_0^{\val}\valloc(\zee)\,\dd \zee + \vpay(0).\\ 
\intertext{Bid strategy $\strat$ is invertible when it is strictly increasing (to be further discussed below). From
  this inversion $\strat^{-1}$, the bid allocation rules that
  implement the allocation rule $\valloc$ are:}
\label{eq:single-agent-bid-allocation-rule}
\balloc(\bid) &= \valloc(\strat^{-1}(\bid)), & \balloc(\bid) &= \valloc(\strat^{-1}(\bid)).
\end{align}
Note that the winner-pays-bid and all-pay strategies $\strat$ from
equation~\eqref{eq:single-agent-strat} as applied in
equation~\eqref{eq:single-agent-bid-allocation-rule} are distinct.

The above discussion assumes that we are given $\valloc$ and which to
construct a winner-pays-bid or all-pay mechanism $(\balloc,\bpay)$.
In this case, inferring values from bids can be accomplished with the
inverse bidding strategy $\strat^{-1}$ from
equation~\eqref{eq:single-agent-strat}.  If instead we were given
$(\balloc,\bpay)$ we could invert values from bids by evaluating the
agent's first order condition as follows.  With value $\val$ the
agent's optimal bid $\bid$ in single-agent mechanism $(\balloc,\bpay)$
satisfies $\val = \bpay'(\bid) / \balloc'(\bid)$.  Evaluating this
formula with the winner-pays-bid and all-pay payment formats of
equation~\eqref{eq:bid-payment-format}, respectively, we have:
\begin{align}
\label{eq:single-agent-foc}
\val &= \bid + \balloc(\bid)/\balloc'(\bid), & \val &=
1/\balloc'(\bid). 
\end{align}

Thus, we see from any desired allocation rule $\valloc$ there is a
corresponding payment rule $\vpay$, bid-allocation rule $\balloc$, and
bid strategy $\strat$.  Unless explicitly stated otherwise, the
payment rule, bid-allocation rule, and bid strategy correspond to
transfer $\vpay(0) = 0$.  In \Cref{s:dashboards}, we will define
dashboards based on single-agent allocation rule $\ealloc$ with
corresponding payment rule $\epay$, bid-allocation rule $\dalloc$, and
bid strategy $\dstrat$.  These notational conventions will be adopted
throughout the paper.  A key comparison will be of the desired allocation rule with $\valloc$, $\vpay$, $\balloc$, and $\strat$ against the dashboard with $\ealloc$, $\epay$, $\dalloc$, and $\dstrat$.

In the above derivation of winner-pays-bid and all-pay implementations
of the allocation rule $\valloc$, it was assumed that the bid strategy
$\strat$ is strictly increasing.  When the allocation rule $\valloc$
is strictly increasing and with $\vpay(0) \leq 0$, $\strat$ is
strictly increasing for both all-pay and winner-pays-bid
implementations.  When the transfer $\vpay(0) > 0$, is bid strategy
$\strat$ is strictly increasing for all-pay implementations but not
increasing for winner-pays-bid implementations.

\begin{lemma}
  \label{l:p0}
  For strictly increasing allocation rule $\valloc$, all-pay bid
  functions $\strat$ are strictly increasing.  For strictly increasing
  allocation rule $\valloc$ and non-positive-transfer $\vpay(0) \leq
  0$, winner-pays-bid bid functions $\strat$ are strictly increasing.
\end{lemma}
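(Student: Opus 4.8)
The plan is to express both bid strategies in terms of the truthful payment rule $\vpay$ from the payment identity \eqref{eq:payment-identity}, and thereby reduce each claim to a monotonicity statement about $\vpay$ itself. Throughout I work on the nonnegative value domain and use that $\valloc$ maps into $[0,1]$ together with strict monotonicity, so that $\valloc(\val) > 0$ for every $\val > 0$. The two formulas in \eqref{eq:single-agent-strat} are precisely $\strat(\val) = \vpay(\val)$ in the all-pay case and $\strat(\val) = \vpay(\val)/\valloc(\val)$ (the per-unit price) in the winner-pays-bid case, so the first claim is monotonicity of $\vpay$ and the second is monotonicity of the quotient $\vpay/\valloc$.

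For the all-pay case, it suffices to show $\vpay$ is strictly increasing. Integrating \eqref{eq:payment-identity} by parts rewrites it as $\vpay(\val) = \int_0^{\val} \zee\,\dd\valloc(\zee) + \vpay(0)$, so for $\val_1 < \val_2$ the increment equals $\int_{\val_1}^{\val_2}\zee\,\dd\valloc(\zee)$. This is strictly positive because strict monotonicity of $\valloc$ forces the Stieltjes measure $\dd\valloc$ to place positive mass on $(\val_1,\val_2]$ while the integrand $\zee$ is positive there. Note this argument needs no smoothness of $\valloc$ and holds for any transfer $\vpay(0)$, which matches the lemma's unconditional statement for all-pay.

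For the winner-pays-bid case I would differentiate the quotient and substitute $\vpay'(\val) = \val\,\valloc'(\val)$, the derivative form of the payment identity. This collapses the numerator, giving
\[
\strat'(\val)
= \frac{\vpay'(\val)\,\valloc(\val) - \vpay(\val)\,\valloc'(\val)}{\valloc(\val)^2}
= \frac{\valloc'(\val)\bigl(\val\,\valloc(\val) - \vpay(\val)\bigr)}{\valloc(\val)^2}
= \frac{\valloc'(\val)\bigl(\int_0^{\val}\valloc(\zee)\,\dd\zee - \vpay(0)\bigr)}{\valloc(\val)^2},
\]
where the last step uses that $\val\,\valloc(\val) - \vpay(\val)$ is exactly the agent's truthful utility $\int_0^{\val}\valloc(\zee)\,\dd\zee - \vpay(0)$. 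With $\vpay(0)\le 0$ this utility term is at least $\int_0^{\val}\valloc(\zee)\,\dd\zee\ge 0$ and strictly positive for $\val>0$, while $\valloc'(\val)>0$ and $\valloc(\val)^2>0$; hence $\strat'(\val)>0$ and $\strat$ is strictly increasing.

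The main obstacle is the winner-pays-bid case: the bid is a ratio, so its monotonicity does not follow directly from that of $\vpay$. The crux is recognizing that after substituting the payment identity the sign of $\strat'$ is governed solely by the agent's utility term $\int_0^{\val}\valloc(\zee)\,\dd\zee - \vpay(0)$, which is exactly where the hypothesis $\vpay(0)\le 0$ enters; when $\vpay(0)>0$ this term is negative for small $\val$, explaining why the claim fails there. A secondary point is the boundary at $\val=0$, where $\valloc(0)$ may vanish and the quotient degenerates; I would handle this by restricting to positive values (or a limiting argument), and if $\valloc$ is not assumed differentiable I would replace the quotient-rule computation by the same integration-by-parts manipulation applied to $\vpay(\val_2)\valloc(\val_1)$ versus $\vpay(\val_1)\valloc(\val_2)$.
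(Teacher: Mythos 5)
Your proof is correct and takes essentially the same route as the paper's: for all-pay you use the identification $\strat = \vpay$ together with strict monotonicity of the payment rule, and for winner-pays-bid you differentiate the bid function and reduce the sign of $\strat'$ to the comparison of surplus $\val\,\valloc(\val)$ against payment $\vpay(\val)$, which is precisely where the hypothesis $\vpay(0) \leq 0$ enters (and your remark about failure when $\vpay(0) > 0$ matches the paper's). The only difference is that you carry out explicitly, via Stieltjes integration by parts and the quotient rule, the steps the paper delegates to the geometry of \Cref{f:myerson}.
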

\begin{proof}
  The all-pay result follows from equality of the bid function
  $\strat$ and payment rule $\vpay$ and the strict monotonicity of the
  latter under the assumption that the allocation rule $\valloc$ is
  strictly monotone.  See \Cref{f:myerson}.

  The winner-pays-bid result follows from differentiating the bid
  function and observing that the derivative is positive if and only
  if the surplus is at least the payment, i.e., $\val\,\valloc(\val) >
  \vpay(\val)$ for all $\val \geq 0$.  Specifically, by the geometry
  of the payment identity (\Cref{f:myerson}), $\vpay(\val) \leq 0$
  implies $\val\,\valloc(\val) > \vpay(\val)$; while $\vpay(\val) > 0$
  implies $\val\,\valloc(\val) < \vpay(\val)$ at $\val=0$.
  \end{proof}

The dashboard mechanisms designed in the remainder of the paper
effectively convert the mechanism design problem of implementing a
multi-agent allocation algorithm into a collection of problems where
the above single-agent derivations apply.

\section{Dashboard Mechanisms}

\label{s:dashboards}

This section defines a family of dashboard mechanisms that give a
practical approach to bidding and optimization in non-truthful
Internet markets.  The principal publishes a bidding dashboard which
informs the agents of the market conditions; the agents use this
dashboard to optimize their bids.  We consider dashboards that give
each agent estimates of the outcome of the mechanism for any possible
bid of the agent.  These estimates can be constructed, for example,
from historical bid behavior of the agents in the mechanism.

\begin{definition}\label{d:dashboard}
For a single-agent incentive compatible mechanism $(\ealloc,\epay)$,
the single-agent {\em dashboard} is $\dalloc : \reals \to [0,1]$ as
given by equation~\eqref{eq:single-agent-bid-allocation-rule} and the
bid-strategy for the dashboard is $\dstrat : \reals \to \reals$ as
given by equation~\eqref{eq:single-agent-strat}.\footnote{Thus, the
    notation for dashboard $\ealloc$ of payment rule $\epay$,
    bid-allocation rule $\dalloc$, and bid-strategy $\dstrat$
    parallels the notation for allocation rule $\valloc$ with $\vpay$,
    $\balloc$, and $\strat$.}  A multi-agent dashboard is a profile of
  single-agent bid-allocation rules $\dallocs =
  (\dalloci[1],\ldots,\dalloci[n])$ where $\dalloci(\bid)$ is the
  forecast probability that agent $i$ wins with bid~$\bid$.
\end{definition}

We consider the implementation of an allocation algorithm with a
monotone allocation rule
$\epvallocs : \reals^n \to [0,1]^n$ by winner-pays-bid and all-pay
dashboard mechanisms.  The winner-pays-bid and all-pay format
dashboards are identical except with respect to equations
\eqref{eq:single-agent-strat}-\eqref{eq:single-agent-foc} which
respectively define the value-to-bid optimal bidding function, the
translation from (value) allocation rules to bid allocation rules, and
the bid-to-value inversion for a bid-allocation rule.

We consider two design questions: (a) what dashboard should the
principal publish and (b) what mechanism should the principal run.
The goal is to pick a dashboard for which following the
dashboard is a good strategy and if agents follow the
dashboard then the allocation algorithm is implemented.  In fact, if
agents follow the dashboard, then the mechanism to implement the
allocation algorithm is straightforward.

\begin{definition}
\label{d:dashboard-mechanism}
The dashboard mechanism $\epballocs : \reals^n \to
[0,1]^n$ for dashboard $\dallocs$ and allocation algorithm $\epvallocs$
is:
\begin{enumerate}
\item[0.] Solicit bids $\bids$ for dashboard $\dallocs$.
\item \label{s:dashboard-inference} Infer values $\evals$ from bids
  via equation~\eqref{eq:single-agent-foc}.
\item Output allocation $\epballocs(\bids) = \epvallocs(\evals)$ (with prices $\epbpays(\bids)$ according to the payment format).
\end{enumerate}
\end{definition}

Notice that in the above definition the bid allocation rule
$\epballocs$ is a mapping $\reals^n \to [0,1]^n$ from bid profiles to
allocation profiles while the dashboard $\dallocs$ is a profile of
single-agent bid allocation rules $(\reals \to [0,1])^n$ each of which
maps a bid to an allocation probability.  In other words, outcomes
according to the dashboard are calculated independently across the
agents while outcomes according to the mechanism depend on the reports
of all agents together.  (The notation distinguishes these kinds of
allocation rules by using distinct typeface.)  The following
proposition follows simply from the correctness of
equation~\eqref{eq:single-agent-foc}.

\begin{prop} 
In the dashboard mechanism (\Cref{d:dashboard-mechanism}) for any
given strictly monotone and continuously differentiable dashboard and any
allocation algorithm, if the agents follow the dashboard then the
allocation algorithm is implemented.
\end{prop}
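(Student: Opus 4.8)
The plan is to reduce the multi-agent statement to the single-agent inference identity of \Cref{s:single-agent} and apply it independently to each agent. In the dashboard mechanism of \Cref{d:dashboard-mechanism} the inference step is carried out separately for each agent $i$, recovering $\evali$ from that agent's bid $\bidi$ and own bid-allocation rule $\dalloci$ via \eqref{eq:single-agent-foc}; the allocation algorithm is only invoked afterward. It therefore suffices to show that, for each agent $i$, if agent $i$ follows the dashboard then $\evali = \vali$. Granting this for every $i$, the final step outputs $\epballocs(\bids) = \epvallocs(\evals) = \epvallocs(\vals)$, which is precisely the outcome the algorithm produces on the true value profile; that is the meaning of implementing the algorithm.

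For the single-agent step, recall that following the dashboard means agent $i$ with value $\vali$ submits the dashboard bid $\bidi = \dstrati(\vali)$, where $\dstrati$ is the bid strategy of \eqref{eq:single-agent-strat}. By \Cref{d:dashboard} and the derivation preceding it, $\dstrati$ is built so that bidding $\dstrati(\vali)$ is a best response against $\dalloci$ under the relevant (winner-pays-bid or all-pay) payment format from \eqref{eq:bid-payment-format}. I would then identify the inference formula \eqref{eq:single-agent-foc} as exactly the agent's first-order condition $\vali = \bpay'(\bidi)/\dalloci'(\bidi)$ for optimality against $\dalloci$, written out for the two formats. Since $\bidi = \dstrati(\vali)$ is by construction the bid at which this first-order condition holds for value $\vali$, substituting $\bidi$ into \eqref{eq:single-agent-foc} returns $\vali$; hence $\evali = \vali$.

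The two smoothness and monotonicity hypotheses are what make this inversion well-posed. Strict monotonicity of the underlying dashboard allocation rule guarantees, through \Cref{l:p0}, that $\dstrati$ is strictly increasing and therefore invertible, so each bid corresponds to a single value and the first-order condition pins $\vali$ down uniquely rather than merely admitting it as a solution. Continuous differentiability guarantees that $\dalloci'(\bidi)$ exists, and strict monotonicity that it is nonzero, so the quotient in \eqref{eq:single-agent-foc} is defined. Note that the statement is conditional on agents following the dashboard, so I do not need the dashboard forecasts $\dalloci$ to be consistent with the mechanism's realized allocations; only the best-response inversion matters here.

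I do not anticipate a genuine obstacle, consistent with the remark in the text that the proposition ``follows simply from the correctness of \eqref{eq:single-agent-foc}.'' The one point deserving care is confirming that the two single-agent maps are true inverses, i.e.\ that the first-order condition \eqref{eq:single-agent-foc} used for inference is the same optimality condition that defines the bid strategy \eqref{eq:single-agent-strat}. Both descend from the single payment identity \eqref{eq:payment-identity} of \Cref{t:myerson} --- one solved for the payment to yield the bid strategy, the other solved for the value to yield the inference --- so verifying their agreement is a routine differentiation rather than a new argument.
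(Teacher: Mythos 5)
Your proposal is correct and matches the paper's own (one-line) justification, which simply observes that the proposition follows from the correctness of the first-order-condition inversion in equation~\eqref{eq:single-agent-foc}: following the dashboard means each bid satisfies that first-order condition against $\dalloci$, so the inferred values $\evals$ equal the true values $\vals$ and the mechanism outputs $\epvallocs(\vals)$. You merely spell out what the paper leaves implicit (including the role of strict monotonicity and differentiability in making the inversion well-posed via \Cref{l:p0}), which is faithful rather than a different route.
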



\section{Static Environments: Inferred Values Dashboards}
\label{s:static}
\label{s:inferred-values-dashboard}

Our analysis of dashboards in this section is restricted to static
settings where both the agents' values and the allocation algorithm do
not change from stage to stage.  How should the principal construct
the dashboard to ensure the sequential dashboard mechanism, if all
agents follow the dashboard, converges so that following the dashboard
is a Nash equilibrium?  For static environments, we will say that
following the dashboard {\em converges to Nash equilibrium} if,
assuming other agents follow the dashboard, an agent's best response
in the stage game converges to following the dashboard.

Our approach is motivated by the solution concept of fictitious play.
In fictitious play agents best respond to the empirical distribution
of the actions in previous rounds.  Fictitious play assumes that the
agents know the actions in past rounds.  The principal could publish
these actions; however, a better approach is to just publish as a
dashboard the aggregate bid allocation rules that result.  Our
dashboard follows this approach except with respect to estimated
values rather than actions.  Following such a dashboard is,
in a sense, an improvement on fictitious play.

\begin{definition}
\label{d:inferred-values-dashboard}
For stage $t+1 \geq 2$, the {\em inferred values dashboard} is the
profile of single-agent bid allocation rules $\dallocs \super {t+1}$
defined as follows:
\begin{enumerate}
\item The inferred valuation profile for stage $s \in \{1,\ldots,t\}$ is $\evals \super s$.

\item The profile of single-agent allocation rules for stage $s \in \{1,\ldots,t\}$ is $\vallocs \super s$ with $\valloci \super s (\zee) = \epvalloci(\zee,\evalsmi \super s)$ for each agent $i$.
\item The empirical profile of single-agent allocation rules at stage $t+1$ is $\eallocs \super {t+1}$ with $\ealloci \super {t+1} (\zee) = \frac{1}{t} \sum_{s\leq t} \valloci \super s (\zee)$ for each agent $i$.
\item 
The inferred values dashboard is the profile of single-agent bid
allocation rules $\dallocs \super {t+1}$ that correspond to profile
$\eallocs \super {t+1}$ via
equation~\eqref{eq:single-agent-bid-allocation-rule}.
\end{enumerate}
\end{definition}

\begin{definition} 
\label{d:last-stage}
The {\em $k$-lookback inferred values dashboard} is the variant of the
inferred values dashboard that averages over the last $\min(k,t)$
stages.  The {\em last-stage inferred values dashboard} chooses $k=1$.
\end{definition}

The inferred values dashboard does not specify a dashboard for stage
1.  For stage 1, any strictly increasing dashboard will suffice.  When
the agents' values and the allocation rule are static, following the
dashboard converges to Nash equilibrium.  


\begin{theorem}
\label{t:Nash-convergence}
In static environments with fixed values $\vals$ and continuous and
strictly monotone stage allocation algorithm $\epvallocs$, in the
sequential dashboard mechanism with the $k$-lookback inferred values
dashboard, if agents follow the dashboard in stages $\max(t-k,1)$
through $t>1$ then following the dashboard in stage $t+1$ is a Nash
equilibrium.
\end{theorem}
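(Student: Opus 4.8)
The plan is to show that, in a static environment, the hypothesis forces the stage-$(t+1)$ dashboard to coincide with the true single-agent projection of the allocation algorithm, and then that this dashboard is an \emph{exact} forecast of the stage game, so that best-responding to it coincides with best-responding to the actual mechanism. First I would record the inversion identity implicit in \Cref{s:single-agent}: for a fixed dashboard bid-allocation rule, the bid strategy $\dstrat$ of \eqref{eq:single-agent-strat} and the bid-to-value inference of \eqref{eq:single-agent-foc} are mutually inverse maps. Hence if an agent follows the dashboard at stage $s$ by bidding $\dstrati\super s(\vali)$, the mechanism's inference step (\Cref{d:dashboard-mechanism}) returns exactly $\evali\super s=\vali$.

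Next I would apply this to every agent in every stage of the averaging window, all of which lie in $\{\max(t-k,1),\dots,t\}$ and hence are stages in which agents follow the dashboard, obtaining $\evals\super s=\vals$ throughout the window. Consequently the single-agent rules of \Cref{d:inferred-values-dashboard} satisfy $\valloci\super s(\zee)=\epvalloci(\zee,\evalsmi\super s)=\epvalloci(\zee,\valsmi)=\valloci(\zee)$ for each such $s$, i.e.\ they all equal the true projection $\valloci$ defined in \Cref{s:prelim}. Averaging a constant sequence gives $\ealloci\super{t+1}=\valloci$, so by \eqref{eq:single-agent-bid-allocation-rule} the stage-$(t+1)$ dashboard $\dalloci\super{t+1}$ is precisely the bid-allocation rule implementing $\valloci$, namely $\dalloci\super{t+1}(\bid)=\valloci\bigl(\bigl(\dstrati\super{t+1}\bigr)^{-1}(\bid)\bigr)$.

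I would then establish the self-consistency (accuracy) of this dashboard, which is the heart of the argument and where the ``circular reasoning'' flagged in \Cref{s:intro} is resolved. Suppose every agent follows $\dallocs\super{t+1}$ at stage $t+1$; by the inversion identity the opponents' inferred values are their true values, $\evalsmi\super{t+1}=\valsmi$. Thus when agent $i$ submits any bid $\bid$ in the range of $\dstrati\super{t+1}$, her inferred value is $\evali=\bigl(\dstrati\super{t+1}\bigr)^{-1}(\bid)$ and her realized allocation is $\epvalloci(\evali,\valsmi)=\valloci(\evali)=\valloci\bigl(\bigl(\dstrati\super{t+1}\bigr)^{-1}(\bid)\bigr)=\dalloci\super{t+1}(\bid)$, exactly the dashboard forecast; the payment matches as well, being $\bid$ for all-pay and $\bid\,\valloci(\evali)=\bid\,\dalloci\super{t+1}(\bid)$ for winner-pays-bid. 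Strict monotonicity and continuity of $\epvallocs$, together with the $\vpay(0)=0$ convention and \Cref{l:p0}, guarantee that $\dstrati\super{t+1}$ is strictly increasing, so these inverses are well defined.

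Finally I would conclude. Because $\dalloci\super{t+1}$ reproduces both the allocation and the payment that agent $i$ actually faces for every bid in the range of $\dstrati\super{t+1}$, and bids outside this range are dominated by \Cref{t:myerson}, agent $i$'s true utility as a function of her bid coincides with the utility she computes against the dashboard. Hence the dashboard-optimal bid $\dstrati\super{t+1}(\vali)$ — i.e.\ following the dashboard — is a best response to the others following the dashboard, which is the Nash-equilibrium claim; the argument applies verbatim to both payment formats since the single-agent constructions of \Cref{s:single-agent} cover both. I expect the main obstacle to be the accuracy step: verifying that the forecast assembled from past inferred values is simultaneously the realized allocation rule, which hinges on the inference map and the dashboard bid strategy being exact inverses and on the projection $\valloci$ being recovered unchanged under strict monotonicity and continuity.
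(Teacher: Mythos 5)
Your proposal is correct and follows essentially the same route as the paper's proof: inferred values of dashboard-followers are the true values by equation~\eqref{eq:single-agent-foc}, so the averaged single-agent rules collapse to the true projection $\valloci$, and the stage-$(t+1)$ dashboard exactly matches the allocation rule agent $i$ actually faces, making following it a best response. Your treatment is more explicit than the paper's at the self-consistency step --- verifying that the induced bid-allocation rule coincides with the published dashboard for \emph{every} bid in the range of $\dstrati\super{t+1}$ and dispatching out-of-range bids via \Cref{t:myerson} --- which the paper compresses into a single sentence.
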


\begin{proof}
The dashboards up to and including round $t$ are the average of the
bid allocation rule for a continuous and strictly increasing allocation algorithm; thus they are continuous and strictly increasing.  By
equation~\eqref{eq:single-agent-foc} the inferred values of agents
that follow the dashboard are the true values.  These values are the
same in each stage; thus the profile of single-agent allocation rules
in each stage is the one that corresponds to the allocation algorithm on true valuation profile.  The average of these profiles of
allocation rule is the profile itself (they are all the same).  The
dashboard is the corresponding profile of single-agent bid allocation
rules.

Consider agent $i$ and assume that other agents are following the
dashboard in stage $t+1$.  Thus, the estimated profile of other agent
values is $\evalsmi\super{t+1} = \valsmi$.  The allocation rule in
value space faced by agent $i$ is $\valloci \super {t+1} (\zee) =
\epvalloci(\zee,\valsmi)$ which is equal to the allocation rule in all the
previous stages.  As the dashboard suggests bidding optimally
according to the allocation rule of the previous stages, bidding
according to the dashboard is agent $i$'s best response.  Thus, following
the dashboard is a Nash equilibrium in stage $t+1$.
\end{proof}

This dashboard mechanism is simple and practical and can implement any
strictly monotone and continuously differentiable allocation
algorithm.  For example, welfare maximization with a convex regularizer
gives such a allocation algorithm.  For the paradigmatic problem of
single-minded combinatorial auction environments, our dashboard
mechanism with the regularized welfare-maximization allocation algorithm gives outcomes that can be arbitrarily close to optimal (for
the appropriate regularizer).

A critical issue with the guarantee of \Cref{t:Nash-convergence} is
that it is delicate to the Nash assumption.  If other agents do not
follow the dashboard, then following the dashboard is not necessarily
a good strategy for the agent.  The remainder of the paper will
resolve this issue by giving stronger analyses and stronger
dashboards.  Specifically, we will give a dashboard for which the
sequential dashboard mechanism is strategically equivalent to the
sequential truthful mechanism.

\section{Dynamic Environments: Payment Rebalancing Dashboards}
\label{s:rebalancing}

In this section we develop a dashboard that will satisfy the strong
guarantee that the $\epsilon$ equilibria of the sequential dashboard
mechanism are approximately the same as the $\epsilon$ equilibria from
running the truthful mechanism that implements the allocation
algorithm in each stage.  Moreover, the follow-the-dashboard strategy
corresponds to the truthtelling strategy and, thus, is an
$\epsilon$-equilibrium.\footnote{Note that the sequential truthful
  mechanism will generally have many other equilibria as well.}

\begin{definition}
\label{d:strategic-equiv}
Two sequential mechanisms are {\em $\epsilon$ approximately
  strategically equivalent} if for all sequences of bids profiles
in one mechanism there is a corresponding sequence of bid profiles
in the other mechanism such that the absolute per-stage-average
difference in utilities of any agent in the two mechanisms is at most
$\epsilon$, and vice versa.
\end{definition}

\begin{definition}
\label{d:sequential}
For a sequence of monotone stage allocation algorithms $\epvallocs \super 1,
\ldots, \epvallocs \super t$, the sequential truthful mechanism is
given by the stage mechanism $(\epvallocs \super s,\epvpays \super s)$
for stage $s \in \{1,\ldots,t\}$ where $\epvpayi \super s(\vals)$ is
defined from the payment identity applied to
$\epvalloci(\cdot,\valsmi)$ for each agent $i$.
\end{definition}

\begin{definition}
\label{d:incentive-consistent}
A sequential dashboard mechanism is {\em $\epsilon$ incentive
  consistent} if it is $\epsilon$ strategically equivalent to the
sequential truthful mechanism and the follow-the-dashboard strategy
corresponds to the truthtelling strategy.
\end{definition}



We now develop a dashboard for which the sequential dashboard
mechanism is $\epsilon$ incentive consistent with $\epsilon$ that
vanishes with the number of stages.  The high-level approach of this
dashboard is simple: In any stage where the actual payment and the
truthful payment are different, add the payment residual to a balance
and adjust future dashboards to either collect additional payment or to
discount the required payment so that the outstanding balance is resolved
over a few subsequent stages.  The difference in utilities of an agent
in such a dashboard mechanism and the truthful mechanism with the same
allocation algorithm is bounded by the per-stage payment residual and
the number of stages it takes to resolve it.  When these quantities
are both constants, the average per-stage difference between the
dashboard mechanism's outcome and the truthful mechanism's outcome
vanishes with the number of stages.

The existence of this dashboard shows that in a repeated scenario
there is essentially no difference between winner-pays-bid, all-pay,
and truthful payment formats.  In some sense, linking payments between
stages and adjusting the mapping from bids to values allows a
market designer to choose a payment format that is appropriate for
the application.

Three final comments before presenting the mechanism: First, balancing
payments in all-pay mechanisms is much easier than balancing payments
in winner-pays-bids mechanisms.  The reason is simple, in all-pay
mechanisms the payment is deterministic and, thus, any additional
payment requested is paid exactly.  On the other hand, payments
collected in winner-pays-bid mechanisms depend on the probability of
allocation.  If this probability of allocation is fluctuating then
collected payments can over- or under-shoot a target.  We give an
approach below that resolves this issue.  Second, in the static
setting where the agent values and the allocation algorithm are
unchanging, in the follow-the-dashboard equilibrium of the inferred
values dashboard mechanism (\Cref{d:inferred-values-dashboard}) the only non-trivial payment residual is in
the first stage, once this balance is resolved, the accrual of
subsequent balance is off the equilibrium path.  Thus, in steady state
the rebalancing required by the dashboard is trivial.  Third,
dashboards with payment rebalancing allow dynamically changing
environments, e.g., agent values and the allocation algorithm.  For
agents that follow the dashboard, the rebalancing mechanism only kicks
in when the agent's value or the environment changes.

The main definition and proposition that motivate the approach are as
follows.

\begin{definition}
Define the {\em outstanding balance} of an agent in a $t$-stage
sequential dashboard mechanism for inferred values $\evals\super 1,
\ldots, \evals \super t$ as the magnitude of the total expected
difference in payments between the sequential dashboard mechanism and
the sequential truthful mechanism when agents report values
$\evals\super 1, \ldots, \evals \super t$.
\end{definition}

\begin{prop}
\label{p:eps-ic}
A $t$-stage dashboard mechanism with worst-case (over valuation
profiles) outstanding balance upper bounded by $\epsilon t$ is
$\epsilon$ incentive consistent.
\end{prop}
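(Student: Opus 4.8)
The plan is to verify the two requirements of $\epsilon$ incentive consistency (\Cref{d:incentive-consistent}) in turn: that the follow-the-dashboard strategy corresponds to truthtelling, and that the dashboard mechanism is $\epsilon$ strategically equivalent to the sequential truthful mechanism (\Cref{d:sequential}). The first requirement is immediate from the inference step of the dashboard mechanism (\Cref{d:dashboard-mechanism}): by correctness of equation~\eqref{eq:single-agent-foc}, an agent who follows the dashboard has inferred value equal to her true value in every stage, so the allocation algorithm is executed on the true valuation profile --- exactly as under truthtelling in the sequential truthful mechanism. Thus the substantive work is to establish strategic equivalence.

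For strategic equivalence I would exhibit the bid-sequence correspondence explicitly. Given any sequence of bid profiles $\bids\super 1,\ldots,\bids\super t$ in the dashboard mechanism, let $\evals\super s$ be the values inferred from $\bids\super s$ via equation~\eqref{eq:single-agent-foc}; the corresponding report sequence in the truthful mechanism is $\evals\super 1,\ldots,\evals\super t$. Conversely, given a report sequence $\evals\super 1,\ldots,\evals\super t$ in the truthful mechanism, the corresponding dashboard bids are $\bidi\super s = \dstrati(\evali\super s)$, which exist and invert correctly because the dashboard bid functions are strictly increasing (\Cref{l:p0}) for the strictly monotone, continuously differentiable dashboards under which the mechanism is well defined (see \Cref{s:dashboards}). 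The key observation is that under either direction both mechanisms produce the \emph{identical} allocation in every stage: the dashboard mechanism outputs $\epvallocs(\evals\super s)$ by \Cref{d:dashboard-mechanism}, while the truthful mechanism on reports $\evals\super s$ outputs $\epvallocs(\evals\super s)$ by \Cref{d:sequential}.

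Because the per-stage allocations coincide, the per-stage difference in any agent's utility reduces to the negated difference in payments: writing the agent's (fixed) true value as $\vali$ and her common allocation as $\epvalloci(\evals\super s)$, the surplus term $\vali\,\epvalloci(\evals\super s)$ cancels and the stage-$s$ utility difference equals $-(\epbpayi\super s - \epvpayi\super s)$, \emph{independent of $\vali$}. Summing over the $t$ stages, the magnitude of the total utility difference equals the magnitude of the total expected payment difference, which is precisely the outstanding balance of that agent for inferred values $\evals\super 1,\ldots,\evals\super t$. By hypothesis this worst-case balance is at most $\epsilon t$, so the absolute per-stage-average utility difference is at most $\epsilon$; the same bound holds in the reverse direction by symmetry of the correspondence. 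This gives $\epsilon$ strategic equivalence (\Cref{d:strategic-equiv}) and hence, with the first requirement, $\epsilon$ incentive consistency.

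The one genuine subtlety --- the step I would be most careful about --- is the matching of allocations together with the irrelevance of the true value: one must confirm that the utility comparison in \Cref{d:strategic-equiv} is of a single fixed agent across the two mechanisms, so that the coinciding allocations force the value-dependent surplus to cancel and leave only the payment residual that the outstanding balance tracks. Once this cancellation is in hand, the passage from the bound $\epsilon t$ to $\epsilon$ is mere division by $t$, and invertibility of the dashboard (via \Cref{l:p0}) is exactly what makes the bid-sequence correspondence well defined in both directions.
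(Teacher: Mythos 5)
Your proposal is correct and follows essentially the same route as the paper's proof: both fix the sequence of inferred valuation profiles as the correspondence between the two mechanisms, note that the stage allocations coincide because both run $\epvallocs\super s$ on $\evals\super s$, use linearity of utility in payments to reduce the utility gap to the payment gap (the outstanding balance $\leq \epsilon t$), and divide by $t$. Your version merely spells out details the paper leaves implicit --- the bidirectional bid-sequence correspondence via $\dstrati$ and \Cref{l:p0}, and the cancellation of the value-dependent surplus term --- which are sound elaborations rather than a different argument.
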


\begin{proof}
Fix $t$ valuation profiles $\evals \super 1, \ldots,\evals \super t$
and consider the allocation and payments from the sequential dashboard
mechanism where agents follow the dashboard for these valuation
profiles and the sequential truthful mechanism where agents report
these valuation profiles.  The allocations obtained by these two
mechanisms are identical as both call the stage $s$ allocation
algorithm $\epvallocs \super s$ on valuation profile $\evals \super
s$.  By the assumption of the proposition, the outstanding balance of
any agent, i.e., the total difference in payments of the mechanisms,
is upper bounded by $\epsilon t$.  For any true valuation profiles,
$\vals \super 1, \ldots, \vals \super t$, and because the utilities of
the agent are linear in payments; the per-stage-average difference in
utility between the two mechanisms is at most $\epsilon$.  Thus, the
definition of $\epsilon$ incentive consistency is satisfied.
\end{proof}

\subsection{Rebalancing Dashboards}

In the subsequent developments we will focus on a single agent.  We
are given a dashboard with alloction rule $\ealloc$ (and corresponding
payment rule $\epay$, bid allocation rule $\dalloc$, and bid strategy
$\dstrat$).  The rebalancing method will construct a new dashboard
that corresponds to $\ealloc\primed$, $\epay\primed$,
$\dalloc\primed$, and $\dstrat\primed$.  The general approach of the
construction is that, if the allocation algorithm and dashboard
correspond with $\valloc = \ealloc$ then the dashboard is correct and
the correct payments are made, and otherwise, the dashboard is incorrect
and the incorrect payments will need to be resolved by future
dashboards.

The main idea in the proposed rebalancing approach is that adding a
constant to the agent's expected payment, i.e., setting transfer
$\epay(0) \neq 0$ in the payment identity, does not affect incentives.
Unfortunately, there are two potential issues with this approach.
Crucially, by \Cref{l:p0}, winner-pay-bid bid functions are not
invertible for $\epay(0) > 0$.  Thus, such a dashboard mechanism
cannot be implemented as described thus far.  Moreover, with $\epay(0)
> 0$ the stage mechanism is not individually rational.  This latter
concern is not as serious as the former concern as an agent's long run
utility will be non-negative from participating even when the agent's
utility in a single stage could be negative.  Our approach will be to
view these dashboards as a reference for subsequently developed
dashboards which satisfy $\epay(0) = 0$ and both admit invertible bid
functions and satisfy individual rationality.

\begin{definition}\label{d:rebalancing-dashboard}
For a single-agent truthful mechanism $(\ealloc,\epay)$,
the single-agent {\em reference rebalancing dashboard} for rebalancing rate
$\rebalrate \in (0,1]$ and outstanding balance $\cumbal$ is 
$\dalloc\primed$ that corresponds to payment rule $\epay\primed$
defined as $\epay\primed(\val) = \epay(\val) +
\cumbal\,\rebalrate$, i.e., with $\epay\primed(0) = \cumbal\,\rebalrate$.
\end{definition}

From \Cref{d:rebalancing-dashboard} and equation~\eqref{eq:single-agent-strat}, the bidding strategy can be
calculated for winner-pays-bid and all-pay dashboards, respectively
\begin{align}
\label{eq:rebalancing}
\dstrat\primed(\val) &= \dstrat(\val) + \cumbal\, \rebalrate / \ealloc(\val),&
\dstrat\primed(\val) &= \dstrat(\val) + \cumbal\,\rebalrate.
\end{align}
The bid-allocation rule of the dashboard $\dalloc\primed$ is then
defined via the allocation rule $\ealloc$ and the inverse of the
strategy $\dstrat\primed$ via
equation~\eqref{eq:single-agent-bid-allocation-rule}.  The bid of an
agent should be viewed as shown in equation~\eqref{eq:rebalancing} as
two terms. The first term is for the original dashboard $\ealloc$ and the
second term is for resolving the outstanding balance.  After each
stage the balance is adjusted to account for how much of the
outstanding balance was resolved and by any new payment residual
resulting from misestimation of the dashboard $\ealloc$ for the
realized allocation rule $\valloc$.  While our analysis will keep track
of when payment residuals are generated and how long it takes to
resolve them, the balance tracking need only consider the difference
between what was paid and what should have been paid for the realized
allocation rule.

\begin{definition} 
\label{d:payment-residual}
For allocation rule $\valloc$ with bid-strategy $\strat$, dashboard
allocation rule $\ealloc$, and inferred value $\eval$, the {\em
  payment residual} for the reference rebalancing dashboard $\balloc\primed$ with realized allocation rule $\valloc$ is
\begin{align}
\label{eq:newbal}
\newbal(\eval) &= [\strat(\eval) - \dstrat(\eval)]\, \valloc(\eval), &
\newbal(\eval) &= \strat(\eval) - \dstrat(\eval)\\
\intertext{in winner-pays-bid and all-pay formats, respectively.  The
  {\em balance resolved} is}
\label{eq:rebal}
\rebal(\eval) &= [\cumbal\, \rebalrate / {\ealloc(\eval)}]\, \valloc(\eval), &
\rebal(\eval) &= \cumbal\, \rebalrate,
\end{align}
respectively.  The total change to the balance is $\newbal(\eval) -
\rebal(\eval)$.
\end{definition}

Note that in the above definition for winner-pays-bid mechanisms, if
$\valloc = \ealloc$ then $\strat = \dstrat$ and $\newbal(\eval) = 0$.
For all-pay mechanisms with $\valloc = \ealloc$, then $\strat =
\dstrat$ and $\newbal(\eval) = 0$.  The perspective to have is that in
steady-state it should be that $\valloc = \ealloc$ and there should be
no payment residual but, when agents arrive, depart, or have value
changes, then there can be non-trivial payment residual which will be
rebalanced by the dashboard.  More generally the following lemma
bounds the magnitude of the payment residual.

\begin{lemma}
\label{l:maximum-imbalance}
In a stage in which the agent's inferred value is $\eval$, the
magnitude of payment residual $|\newbal(\eval)|$ is at most $\eval$.
\end{lemma}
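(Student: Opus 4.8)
The plan is to read $\newbal(\eval)$ as a difference of two \emph{expected payments}: the payment the agent ought to make in the truthful mechanism for the realized rule $\valloc$, minus the payment that results from the base dashboard bid $\dstrat(\eval)$ against the realized rule $\valloc$. The bound $|\newbal(\eval)|\le\eval$ then drops out, because the difference of two numbers lying in a common interval of length $\eval$ has magnitude at most $\eval$. The only inputs I need are the geometry of the payment identity from \Cref{t:myerson} and \Cref{f:myerson} (recalling that $\strat,\dstrat$ carry the default transfer $\vpay(0)=\epay(0)=0$), together with $\valloc,\ealloc\in[0,1]$ and their monotonicity.

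First I would handle the all-pay format, where $\newbal(\eval)=\strat(\eval)-\dstrat(\eval)$. Here the bid equals the payment, so $\strat(\eval)$ is the truthful all-pay payment for $\valloc$ and $\dstrat(\eval)$ is the truthful all-pay payment for $\ealloc$. By the surplus-minus-utility picture of \Cref{f:myerson}, each such payment is non-negative (the utility $\int_0^{\eval}\valloc(\zee)\,\dd\zee\le \eval\,\valloc(\eval)$ is at most the surplus by monotonicity) and at most the surplus, which is itself at most $\eval$ since $\valloc(\eval)\le1$. Thus $\strat(\eval),\dstrat(\eval)\in[0,\eval]$ and $|\newbal(\eval)|=|\strat(\eval)-\dstrat(\eval)|\le\eval$.

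Next I would handle winner-pays-bid, where $\newbal(\eval)=[\strat(\eval)-\dstrat(\eval)]\,\valloc(\eval)=\strat(\eval)\,\valloc(\eval)-\dstrat(\eval)\,\valloc(\eval)$. The first term is exactly the truthful expected payment $\vpay(\eval)=\eval\,\valloc(\eval)-\int_0^{\eval}\valloc(\zee)\,\dd\zee$, which lies in $[0,\eval\,\valloc(\eval)]$ by the same geometry. For the second term, the winner-pays-bid bid satisfies $\dstrat(\eval)=\epay(\eval)/\ealloc(\eval)\in[0,\eval]$ (the dashboard payment $\epay(\eval)$ is non-negative and at most the dashboard surplus $\eval\,\ealloc(\eval)$), so $\dstrat(\eval)\,\valloc(\eval)\in[0,\eval\,\valloc(\eval)]$ as well. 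Both terms therefore sit in $[0,\eval\,\valloc(\eval)]$, so their difference has magnitude at most $\eval\,\valloc(\eval)\le\eval$.

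The main obstacle is conceptual rather than computational: correctly identifying what the two terms of $\newbal$ represent in each format, and in particular recognizing that in winner-pays-bid the agent's realized expected payment is the base dashboard bid $\dstrat(\eval)$ multiplied by the \emph{realized} win probability $\valloc(\eval)$, not the dashboard win probability $\ealloc(\eval)$. Once this is pinned down, the containment of both terms in $[0,\eval\,\valloc(\eval)]$ follows immediately from non-negativity of payment and from payment never exceeding surplus, both read directly off \Cref{f:myerson}.
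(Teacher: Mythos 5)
Your proof is correct and takes essentially the same route as the paper's: the paper bounds the two individually rational bids $\strat(\eval)$ and $\dstrat(\eval)$ in $[0,\eval]$ and multiplies their difference by $\valloc(\eval)\in[0,1]$, and your winner-pays-bid decomposition into two expected payments lying in $[0,\eval\,\valloc(\eval)]$ is that same argument with the factor $\valloc(\eval)$ distributed across the difference. The only substantive addition is that you spell out, via the payment-identity geometry of \Cref{f:myerson}, why bids (equivalently, payments) lie in these intervals --- a fact the paper compresses into the phrase ``individually rational.''
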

\begin{proof}
From equation \eqref{eq:newbal}, for all-pay mechanisms the
payment residual is the difference in bids in an individually rational
mechanism.  As bids for an agent with value $\eval$ are between $0$
and $\eval$, the magnitude of their difference is at most $\eval$.
For winner-pays-bid $\strat(\eval) - \dstrat(\eval)$ is
again a difference of bids and multiplying this difference by
$\valloc(\eval) \in [0,1]$ preserves the bound of $\eval$ on its
magnitude.
\end{proof}

The analysis in the sections below makes no assumption about whether
the mechanism is in steady-state.  Both the allocation algorithm and
the agents' values may change from stage to stage.  The main idea of
the analysis is to consider the payment residual $\newbal \super s$ in
stage $s$ and calculate how much of it can remain after stage $t$ when
there is a guaranteed fraction of it is rebalanced as part of $\rebal
\super {s'}$ for subsequent stages $s' \in \{s+1,\ldots,t\}$.  We give
a worst case analysis that pessimisticly assumes that the payment
residual in each stage is the same sign and equal to its maximum
value, i.e., the value of the agent.  These bounds apply to the
rebalancing dashboard constructed from any original dashboard that may
not have any relation to the realized allocation rule.

\subsection{Outstanding Balance of the All-pay Rebalancing Dashboard} 

\begin{definition}
From stage $s$ with outstanding balance $\cumbal \super s$, agent bid
$\bid \super s$, inferred value $\eval \super s$, realized allocation
rule $\valloc \super s$, and corresponding all-pay bid strategy
$\strat \super s$; the stage $s+1$ outstanding balance in the all-pay
rebalancing dashboard is:
$$
\cumbal \super {s+1} = \cumbal \super s + \strat \super s (\eval \super s) - \bid \super s.
$$
\end{definition}

The analysis of the rebalancing dashboard for all-pay mechanisms is
trivial.  To parallel the subsequent presentation for winner-pays-bid
mechanisms below, we write the formal lemma and theorem.

\begin{lemma}
\label{l:all-pay-rebalancing}
In the all-pay dashboard with rebalancing rate $\rebalrate \in (0,1]$,
  in any stage the balance resolved is $\rebal(\eval) =
  \cumbal\,\rebalrate$; for $\rebalrate = 1$ the full outstanding
  balance is resolved.
\end{lemma}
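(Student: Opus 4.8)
The plan is to read off the first identity directly from the all-pay case of \Cref{d:payment-residual} and then verify that this ``balance resolved'' quantity genuinely equals the amount by which the outstanding balance decreases, the only substantive point being the determinism of all-pay payments. Recall from equation~\eqref{eq:rebal} that in the all-pay format the balance resolved is defined as $\rebal(\eval) = \cumbal\,\rebalrate$, so the first assertion of the lemma is immediate once we confirm that this definition tracks the true reduction in outstanding balance.

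To confirm this, I would trace through the construction of the reference rebalancing dashboard. By \Cref{d:rebalancing-dashboard} the payment rule is shifted to $\epay\primed(\val) = \epay(\val) + \cumbal\,\rebalrate$, and by equation~\eqref{eq:rebalancing} the all-pay bid strategy becomes $\dstrat\primed(\val) = \dstrat(\val) + \cumbal\,\rebalrate$, i.e., the original dashboard bid plus a rebalancing term of $\cumbal\,\rebalrate$. Since an all-pay agent pays exactly the bid (equation~\eqref{eq:bid-payment-format}), this additional $\cumbal\,\rebalrate$ is collected with certainty, independently of the realized allocation $\valloc(\eval)$. Thus the amount of outstanding balance retired in the stage is exactly $\cumbal\,\rebalrate$, matching the definition of $\rebal(\eval)$ and establishing the first claim.

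For the second claim I would simply substitute $\rebalrate = 1$, giving $\rebal(\eval) = \cumbal$, so the entire outstanding balance is retired in a single stage. There is essentially no obstacle here: the lemma is trivial precisely because the all-pay payment is deterministic, so the requested rebalancing amount is paid exactly with no over- or under-shooting. This is the feature that fails for winner-pays-bid, where the collected amount is scaled by the (fluctuating) allocation probability $\valloc(\eval)$, and it is exactly the reason the subsequent winner-pays-bid analysis requires genuine work rather than a one-line substitution.
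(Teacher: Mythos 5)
Your proof is correct and takes essentially the same route as the paper, whose entire proof is a citation of equation~\eqref{eq:rebal}; your elaboration---that the all-pay payment format makes the rebalancing term $\cumbal\,\rebalrate$ collected deterministically via $\dstrat\primed(\val) = \dstrat(\val) + \cumbal\,\rebalrate$---is exactly the justification the paper gives in its surrounding discussion of why all-pay rebalancing is trivial compared to winner-pays-bid.
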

\begin{proof}
See equation~\eqref{eq:rebal}.
\end{proof}

\begin{theorem}
\label{t:cumulative-imbalance-all-pay}
In dynamic environments, the sequential dashboard mechanism with an
all-pay rebalancing dashboard with rebalancing rate
$\rebalrate = 1$ and per-stage estimated value at most $\vmax$;
the outstanding balance at stage $t$ is at most $\vmax$
and, consequently, over $t$ stages the dashboard mechanism is
$\vmax/t$ incentive consistent.
\end{theorem}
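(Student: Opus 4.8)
The plan is to unroll the one-step balance recursion and to observe that at rebalancing rate $\rebalrate = 1$ it telescopes, so that only the single most recent payment residual can survive. First I would rewrite the increment in the update
\[\cumbal \super {s+1} = \cumbal \super s + \strat \super s(\eval \super s) - \bid \super s\]
in the residual-minus-resolved form of \Cref{d:payment-residual}. Because the inferred value $\eval \super s$ is, by construction of the bid-to-value inversion, the value whose stage-$s$ rebalancing-dashboard bid equals the submitted bid $\bid \super s$, equation~\eqref{eq:rebalancing} applied at stage $s$ with outstanding balance $\cumbal \super s$ gives $\bid \super s = \dstrat \super s(\eval \super s) + \cumbal \super s\,\rebalrate$. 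Substituting, the increment becomes $\strat \super s(\eval \super s) - \bid \super s = [\strat \super s(\eval \super s) - \dstrat \super s(\eval \super s)] - \cumbal \super s\,\rebalrate = \newbal \super s(\eval \super s) - \rebal \super s(\eval \super s)$, exactly the total change to the balance of \Cref{d:payment-residual}.

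Next I would specialize to $\rebalrate = 1$. By \Cref{l:all-pay-rebalancing} the balance resolved is then the full outstanding balance, $\rebal \super s(\eval \super s) = \cumbal \super s$, so the recursion collapses to $\cumbal \super {s+1} = \newbal \super s(\eval \super s)$: every accumulated balance is wiped out in one stage and only the stage-$s$ residual remains. Applying \Cref{l:maximum-imbalance} then bounds $|\cumbal \super {s+1}| = |\newbal \super s(\eval \super s)| \le \eval \super s \le \vmax$ for every $s$, which is precisely the claim that the outstanding balance at stage $t$ is at most $\vmax$.

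For the incentive-consistency conclusion I would identify this running balance with the quantity in \Cref{p:eps-ic}. Starting from $\cumbal \super 1 = 0$, the recursion sums to $\cumbal \super {t+1} = \sum_{s \le t}[\strat \super s(\eval \super s) - \bid \super s]$, whose magnitude is exactly the total expected payment difference between the dashboard and truthful mechanisms, i.e.\ the outstanding balance of the $t$-stage mechanism. Since this magnitude is at most $\vmax = (\vmax/t)\,t$, the hypothesis of \Cref{p:eps-ic} holds with $\epsilon = \vmax/t$, and \Cref{p:eps-ic} yields $\vmax/t$ incentive consistency.

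Because the all-pay payment is deterministic — the payment equals the bid, so there is no stochastic over- or under-shooting of the rebalancing target — the recursion resolves in a single stage and the argument is short, as anticipated by the remark that the all-pay analysis is trivial. The only points requiring care are bookkeeping: confirming that the increment of the balance recursion really is the residual-minus-resolved quantity of \Cref{d:payment-residual} once the bid-to-value inversion is used to express $\bid \super s$, and matching the definition of the outstanding balance at stage $t$ to the $\epsilon t$ hypothesis of \Cref{p:eps-ic} so that the conversion to $\epsilon = \vmax/t$ is exact rather than off by a stage.
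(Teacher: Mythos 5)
Your proposal is correct and follows essentially the same route as the paper's proof: with $\rebalrate = 1$ the full balance is resolved each stage (\Cref{l:all-pay-rebalancing}), so only the most recent payment residual survives, which \Cref{l:maximum-imbalance} bounds by $\vmax$, and \Cref{p:eps-ic} then gives $\vmax/t$ incentive consistency. Your explicit unrolling of the recursion and verification that the increment $\strat \super s(\eval \super s) - \bid \super s$ equals the residual-minus-resolved quantity of \Cref{d:payment-residual} is just the bookkeeping the paper leaves implicit.
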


\begin{proof}
\Cref{l:maximum-imbalance} upper bounds the payment residual of each
stage by $\vmax$ (which is added to the outstanding balance).
\Cref{l:all-pay-rebalancing} upper bounds the outstanding balance
prior to stage $s$ that is resolved in stage $s$.  With $\rebalrate =
1$ the outstanding balance at stage $t$ is only the payment residual
from stage $t$.  Thus, the outstanding balance after stage $t$ is
upper bounded by $\vmax$.  Applying \Cref{p:eps-ic} the sequential
dashboard mechanism is $\vmax/t$ incentive consistent.
\end{proof}

Note that \Cref{t:cumulative-imbalance-all-pay} can be improved in
static environments where both inferred stage values $\eval \super s$
and the single-agent allocation rules induced from the values of the
other agents $\valloc \super s$ are static, i.e., $\eval \super s =
\eval$ and $\valloc \super s = \valloc$.  In these cases the inferred
values dashboard (\Cref{d:inferred-values-dashboard}) is $\ealloc
\super s = \valloc$ in all stages but the first and only this first
stage accrues payment residual.  Thus, the total balance at stage $t
\geq 2$ with $\rebalrate = 1$ is $\cumbal \super t = 0$.

\subsection{Outstanding Balance of the Winner-pays-bid Rebalancing Dashboard} 

Defining good rebalancing dashboards is more challenging for
winner-pays-bid mechanisms as the balance resolved depends on the
dashboard at the estimated value $\ealloc(\eval)$ which we do not
require to be constant across rounds.  We therefore chose a
conservative rebalancing rate $\rebalrate < 1$ to avoid overshooting
the target of zero balance.  Specifically, we will set $\rebalrate$
less than the minimum allocation probability of the dashboard; thus
$\rebalrate/\ealloc(\eval) \leq 1$ in equation~\eqref{eq:rebal}.  As
the rebalancing dashboard can fix any incorrect dashboard, it may be
desirable to distort the dashboard at the low end to guarantee that
that the minimum allocation probability of the dashboard is not too
small.

Since we are considering winner-pays-bid mechanisms we will only
adjust an agent's balance when the agent is allocated.  To make
explicit the actual values versus their expectations we will adopt
notation $\indcumbal$ and $\indalloc$ for the actual balance and
actual allocation with $\cumbal = \expect{\indcumbal}$ and
$\valloc(\eval) = \expect{\indalloc}$.  We rewrite, per the discussion
above, the payment residual and balanced resolved from equations \eqref{eq:newbal} and~\eqref{eq:rebal} for inferred value
$\eval$ as:
\begin{align}
\label{eq:newindbal}
\indnewbal &= [\strat(\eval) - \dstrat(\eval)]\, \indalloc, &
\indrebal &= [\cumbal\, \rebalrate / {\ealloc(\eval)}]\, \indalloc
\end{align}
where $\expect{\indnewbal} = \newbal(\eval)$ and $\expect{\indrebal} =
\rebal(\eval)$.  Note that $\strat$ is the
winner-pays-bid bid strategy corresponding to $\valloc$ while $\dstrat$
is the winner-pays-bid bid strategy for dashboard $\ealloc$.

\begin{definition}
From stage $s$ with outstanding balance $\cumbal \super s$, agent bid
$\bid \super s$, inferred value $\eval \super s$, realized allocation rule $\valloc
\super s$, corresponding winner-pays-bid bid strategy $\strat \super s$, and realized allocation $\indalloc \super s$; the
stage $s+1$ outstanding balance in the all-pay rebalancing dashboard 
is:
$$
\indcumbal \super {s+1} = \indcumbal \super s + [\strat \super s(\eval \super s)  - \bid \super s]\, \indalloc \super s.
$$
\end{definition}

Denote by $\rebalstages = \{s : \indalloc \super s \neq 0\}$ the set
of stages where the agent is allocated and by $\numrebalstages =
\setsize{\rebalstages}$ the number of such stages.  These are the
stages where $\indnewbal \super s$ and $\indrebal \super s$ can be
non-zero.  We consider the amount of the payment residual $\indnewbal
\super s$ from stage $s \in \rebalstages$ that remains at final stage
$t$ when, at each subsequent stage in $\rebalstages$, a fraction of that
payment residual is resolved.

\begin{lemma}
\label{l:remaining-imbalance}
At any rebalancing stage $s \in \rebalstages$ and under any agent
strategy, the winner-pays-bid rebalancing dashboard with rebalancing rate $\rebalrate
\in (0,1)$ for dashboard allocation rule $\ealloc \super
s$ with allocation probability supported on $[\rebalrate,1]$ resolves
balance $\indrebal$ between $\indcumbal\,\rebalrate$ and
$\indcumbal$.
\end{lemma}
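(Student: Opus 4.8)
The plan is to evaluate the realized balance-resolved expression of equation~\eqref{eq:newindbal} at a rebalancing stage and then sandwich it using the support hypothesis on the dashboard allocation rule. First I would record that in a winner-pays-bid mechanism the realized allocation $\indalloc \super s$ is a $\{0,1\}$-valued indicator of whether the agent is served, with win probability $\valloc \super s(\eval) = \expect{\indalloc \super s}$. Since $\rebalstages$ is defined as the set of stages with $\indalloc \super s \neq 0$, at any $s \in \rebalstages$ the indicator is not merely positive but exactly $\indalloc \super s = 1$. This is the one genuinely load-bearing observation, so I would state it explicitly before the computation.

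Substituting $\indalloc = 1$ into the realized balance-resolved formula $\indrebal = [\indcumbal\,\rebalrate / \ealloc(\eval)]\,\indalloc$ collapses it to $\indrebal = \indcumbal\,\rebalrate / \ealloc(\eval)$, and the remaining task is to bound the multiplicative factor $\rebalrate / \ealloc(\eval)$. Here I would invoke the hypothesis that the stage dashboard $\ealloc \super s$ has allocation probability supported on $[\rebalrate,1]$, so $\ealloc(\eval) \in [\rebalrate,1]$ for \emph{every} inferred value $\eval$, giving $\rebalrate / \ealloc(\eval) \in [\rebalrate,1]$ (the factor equals $\rebalrate$ when $\ealloc(\eval)=1$ and equals $1$ when $\ealloc(\eval)=\rebalrate$). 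Multiplying through by $\indcumbal$ places $\indrebal$ between $\indcumbal\,\rebalrate$ and $\indcumbal$, as claimed. Because the support condition holds uniformly in $\eval$, the bound is insensitive to what value gets inferred, which is exactly what the phrase ``under any agent strategy'' demands.

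I do not expect a hard step: the lemma is essentially the payoff of the design choice, discussed immediately above the statement, to take $\rebalrate$ no larger than the minimum dashboard allocation probability precisely so that $\rebalrate / \ealloc(\eval) \leq 1$ and the rebalancing never overshoots zero balance. The only point requiring care is the reduction to $\indalloc = 1$: a fractional realized allocation would scale the resolved balance strictly below $\indcumbal\,\rebalrate$ and break the lower bound, so it matters that the realized indicator is \emph{exactly} one at a rebalancing stage rather than merely positive. I would also note in passing that the word ``between'' is order-agnostic with respect to the sign of $\indcumbal$: when $\indcumbal < 0$ the two endpoints $\indcumbal\,\rebalrate$ and $\indcumbal$ simply swap roles, and $\indrebal = \indcumbal\,(\rebalrate/\ealloc(\eval))$ still lies in the closed interval they bracket.
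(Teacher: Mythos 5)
Your proposal is correct and matches the paper's proof, which is exactly the one-line observation that by definition $\indrebal = [\indcumbal\,\rebalrate/\ealloc(\eval)]\,\indalloc$ and the support assumption gives $\rebalrate \leq \rebalrate/\ealloc(\eval) \leq 1$. Your added explicit points---that $\indalloc\super s = 1$ (not merely positive) at a rebalancing stage, and that ``between'' is order-agnostic when $\indcumbal < 0$---are sound clarifications of the same argument rather than a different route.
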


\begin{proof} 
This result follows from the definition of $\indrebal$ and the fact
that $\rebalrate \leq \rebalrate / \ealloc(\eval) \leq 1$ for all $\eval$ by the
assumption that $1 \geq \ealloc(\eval) \geq \rebalrate$.
\end{proof}



The following theorem about the winner-pays-bid dashboard rebalancing
mechanism will upper bound the outstanding balance at any time.  With
these quantities taken as constants relative to the number of stages
$t$, the imbalance per stage is vanishing with $t$.  It is useful to
contrast the assumptions and bounds of the analogous result for
all-pay dashboards (\Cref{t:cumulative-imbalance-all-pay}) with
\Cref{t:cumulative-imbalance}.

\begin{theorem}
\label{t:cumulative-imbalance}
For any monotonic stage allocation rules $\valloc \super 1,\ldots,
\valloc \super t$ and any monotonic dashboard rules $\ealloc \super 1,
\ldots, \ealloc \super t$ with probabilities supported on
$[\rebalrate,1]$, the winner-pays-bid payment rebalancing dashboard
with rebalancing rate $\rebalrate$ and per-stage payment residual at
most $\maxbal$ has outstanding balance at stage $t$ of at most
$\maxbal / \rebalrate$.
\end{theorem}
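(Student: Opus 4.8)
The plan is to follow the balance across only the stages where the agent is actually served. By the definitions in equation~\eqref{eq:newindbal}, on any stage $s \notin \rebalstages$ we have $\indalloc\super s = 0$, so both the new residual $\indnewbal\super s$ and the resolved amount $\indrebal\super s$ vanish and the balance is unchanged. Hence it suffices to bound the realized balance $\indcumbal$ after passing through the rebalancing stages $\rebalstages$, on each of which $\indalloc\super s = 1$; a bound on every realized path then bounds the outstanding (expected) balance $\cumbal\super t = \expect{\indcumbal\super t}$, which is what the theorem asks for.

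First I would record the two ingredients the argument needs, both already available and both holding under any agent strategy and for arbitrarily varying $\valloc\super s$ and $\ealloc\super s$. From the per-stage hypothesis (cf.\ \Cref{l:maximum-imbalance}) the residual introduced at each rebalancing stage has magnitude $|\indnewbal\super s| \le \maxbal$. From \Cref{l:remaining-imbalance}, because each $\ealloc\super s$ is supported on $[\rebalrate,1]$, the resolved amount $\indrebal\super s = \indcumbal\super s\,\rebalrate/\ealloc\super s(\eval)$ lies between a $\rebalrate$-fraction and the full current balance; equivalently, passing through a rebalancing stage multiplies the pre-existing balance by a factor $1-\rebalrate/\ealloc\super s(\eval) \in [0,\,1-\rebalrate]$ before the new residual is added.

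The cleanest finish is the worst-case decomposition flagged in the preamble: attribute the final balance to the residuals generated at the stages of $\rebalstages$. Since the resolution scales the entire current balance by the same factor, it acts linearly on each outstanding residual, contracting it by a factor at most $1-\rebalrate$ at every later rebalancing stage. A residual followed by $j$ further rebalancing stages therefore contributes at most $\maxbal\,(1-\rebalrate)^{j}$ to the final balance, and bounding pessimistically by taking all residuals of maximal magnitude and equal sign,
\begin{align*}
|\indcumbal\super t| \;\le\; \maxbal \sum_{j=0}^{\numrebalstages-1} (1-\rebalrate)^{j} \;\le\; \maxbal \sum_{j \ge 0} (1-\rebalrate)^{j} \;=\; \frac{\maxbal}{\rebalrate}.
\end{align*}
An equivalent one-line induction gives the same thing: if the balance entering a rebalancing stage has magnitude at most $\maxbal/\rebalrate$, then after resolution its magnitude is at most $(1-\rebalrate)\,\maxbal/\rebalrate = \maxbal/\rebalrate - \maxbal$, and adding a residual of magnitude at most $\maxbal$ restores the bound; the base case is the zero balance before the first service.

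The step I expect to carry the real content is controlling the resolved fraction in the winner-pays-bid format so that resolution neither overshoots nor stalls. This is precisely where the support assumption $\ealloc\super s \in [\rebalrate,1]$ enters, forcing $\rebalrate \le \rebalrate/\ealloc\super s(\eval) \le 1$ and hence a genuine geometric contraction rather than an uncontrolled multiplier; \Cref{l:remaining-imbalance} packages exactly this, so once it is invoked the remaining summation is routine. I would also be careful to emphasize that nothing here uses a steady-state or equilibrium assumption, which is the substantive difference from the all-pay case of \Cref{t:cumulative-imbalance-all-pay} where the deterministic payment lets $\rebalrate=1$ resolve the balance outright.
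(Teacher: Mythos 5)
Your proof is correct and takes essentially the same approach as the paper's: both restrict attention to the allocated stages, invoke \Cref{l:remaining-imbalance} to contract each residual $\indnewbal \super {s_k}$ by a factor at most $(1-\rebalrate)$ at every subsequent allocated stage, and pessimistically sum the geometric series $\maxbal \sum_{k \geq 0} (1-\rebalrate)^k = \maxbal/\rebalrate$. Your inductive restatement and the explicit remark that the bound holds pathwise under any strategy are consistent with, but not additions of substance to, the paper's argument.
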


\begin{proof}
If we start a stage $s$ with outstanding balance $\indcumbal \super
s$, \Cref{l:remaining-imbalance} implies that at least a $\rebalrate$
(and at most 1) fraction of it is resolved.  The balance remaining
is at most $(1-\rebalrate)\,\indcumbal \super s$ (and at least zero).

Denote the stages that the agent is allocated, indexed in decreasing
order, by $\rebalstages = \{s_0,\ldots,s_{\numrebalstages-1}\}$.  The
payment residual $\indnewbal \super {s_k}$ from stage $s_k$ that
remains in the final stage $t$ is at most
$(1-\rebalrate)^k\,\indnewbal \super {s_k}$. (This bound is worst case,
specifically, we do not track the possibility that some of the balance
might cancel with new payment residual of the opposite sign.)  Summing over each stage $s_k \in \rebalstages$, we can
bound the outstanding balance at stage $t$ by:
\begin{align*}
\cumbal \super t &\leq \maxbal \sum\nolimits_{k=0}^{\numrebalstages-1} (1-\rebalrate)^k\\
                 &\leq \maxbal \sum\nolimits_{k=0}^{\infty} (1-\rebalrate)^k\\
                 &= \maxbal/\rebalrate. \qedhere
\end{align*}
\end{proof}

With an upper bound on the per-stage estimated value of $\vmax$,
\Cref{t:cumulative-imbalance} can be combined with
\Cref{l:maximum-imbalance}, which bounds $\maxbal \leq \vmax$, to obtain the following corollary.  The
sequential dashboard mechanism with a rebalancing dashboard is
$\epsilon$ incentive consistent with $\epsilon \leq
\vmax/(\rebalrate\, t)$ in $t$ rounds (see
\Cref{d:incentive-consistent} and \Cref{p:eps-ic}).  With
$\vmax/\rebalrate$ held as constants the incentive inconsistency
vanishes with $t$.

\begin{corollary}
\label{c:cumulative-imbalance}
In dynamic environments, the sequential dashboard mechanism with a
winner-pays-bid rebalancing dashboard with rebalancing rate
$\rebalrate \in (0,1)$, dashboard allocation probabilities supported
on $[\rebalrate,1]$, and per-stage estimated value at most $\vmax$;
the outstanding balance at stage $t$ is at most $\vmax / \rebalrate$
and, consequently, over $t$ stages the dashboard mechanism is
$\vmax/(\rebalrate\, t)$ incentive consistent.
\end{corollary}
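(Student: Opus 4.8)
The plan is to derive the corollary purely by composing three results already in hand for the winner-pays-bid rebalancing dashboard, so essentially no new analysis is needed; the content is in verifying that their hypotheses line up under the stated assumptions. The three ingredients are the per-stage residual bound of \Cref{l:maximum-imbalance}, the geometric-decay balance bound of \Cref{t:cumulative-imbalance}, and the balance-to-incentives translation of \Cref{p:eps-ic}.

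First I would use \Cref{l:maximum-imbalance}, which says that in a stage with inferred value $\eval$ the payment residual satisfies $|\newbal(\eval)| \le \eval$. Because the hypothesis caps the per-stage estimated value at $\vmax$, this yields a \emph{uniform} per-stage residual bound $\maxbal \le \vmax$. Next I would invoke \Cref{t:cumulative-imbalance} with this bound: its hypotheses are exactly monotone stage rules $\valloc\super 1,\ldots,\valloc\super t$ and dashboard rules $\ealloc\super 1,\ldots,\ealloc\super t$ supported on $[\rebalrate,1]$ together with a per-stage residual at most $\maxbal$, and its conclusion is an outstanding balance at stage $t$ of at most $\maxbal/\rebalrate$. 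Substituting $\maxbal \le \vmax$ gives balance at most $\vmax/\rebalrate$, which is the first claim. Finally I would apply \Cref{p:eps-ic}: writing this balance bound as $\epsilon t$ with $\epsilon = \vmax/(\rebalrate\,t)$ shows the mechanism is $\vmax/(\rebalrate\,t)$ incentive consistent, the second claim.

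The step requiring the most care---though it is bookkeeping rather than a genuine obstacle---is confirming that the support assumption on the dashboard probabilities threads correctly through \Cref{l:remaining-imbalance} into \Cref{t:cumulative-imbalance}, and that the $\vmax$ cap is precisely what promotes the value-dependent bound of \Cref{l:maximum-imbalance} into a single constant $\maxbal$ usable across all stages. Since both conditions hold by hypothesis, the composition is immediate and the argument reduces to a one-line substitution followed by an appeal to \Cref{p:eps-ic}.
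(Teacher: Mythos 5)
Your proposal is correct and follows exactly the paper's route: the paper obtains the corollary by combining \Cref{l:maximum-imbalance} (which gives $\maxbal \leq \vmax$) with \Cref{t:cumulative-imbalance} to bound the outstanding balance by $\vmax/\rebalrate$, and then invokes \Cref{p:eps-ic} to translate that bound into $\vmax/(\rebalrate\,t)$ incentive consistency. Your attention to the support condition $[\rebalrate,1]$ threading through \Cref{l:remaining-imbalance} matches the hypotheses as the paper uses them, so nothing further is needed.
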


Note that \Cref{t:cumulative-imbalance} can be improved in static
environments where both inferred stage values $\eval \super s$ and the
single-agent allocation rules induced from the values of the other
agents $\valloc \super s$ are static, i.e., $\eval \super s = \eval$
and $\valloc \super s = \valloc$.  In these cases the inferred values
dashboard (\Cref{d:inferred-values-dashboard}) is $\ealloc \super s =
\valloc$ in all stages but the first and only this stage accrues
non-trivial payment residual.  In this case, the total balance at
stage $t$ is $\cumbal \super t \leq
\eval\,(1-\rebalrate)^{\numrebalstages-1}$, i.e., exponentially small
in the number of stages that the agent is allocated.

\subsection{Individual Rationality and Non-negative Transfers}

The rebalancing dashboards presented and analyzed above set a transfer
$\epay(0) \neq 0$ so that payment residuals of early stages are
resolved in subsequent stages.  {\em Individual rationality} requires
that $\epay(0) \leq 0$ and must be satisfied for winner-pays-bid
bidding strategies to be invertible.  {\em Non-negative transfers}
requires that $\epay(0) \geq 0$.  Typical mechanisms satisfy both
properties, i.e., $\epay(0)
= 0$.  

In this section we describe winner-pays-bid dashboards with transfer
$\epay(0) = 0$; the main idea is for the dashboard outcome to look to
high-valued agents like $\epay(0) \neq 0$ but, for low-valued agents
where such an outcome is not feasible with $\epay(0) = 0$, to have
payment as close as possible to the $\epay(0) = 0$ outcome.  Notice that the
highest individually rational payment for an agent with value $\val$
is $\val$; the lowest non-negative payment for an agent is 0.  Thus,
we seek to identify allocation rules with approximately these payments
for low-valued agents to combine with the original allocation rule for
high-valued agents. 

The following observation which follows from the
derivation~\eqref{eq:single-agent-strat} of the winner-pays-bid bid
strategy of $\dstrat(\zee) = \epay(\zee) / \ealloc(z)$ motivates the
developments of this section.  Specifically, scaling the dashboard
allocation $\ealloc$ by some factor $\alpha > 0$ similarly scales the
payments $\epay$ by the same factor; the bid function is their ratio
which is constant with respect to this scaling.
\begin{fact}
  \label{f:scale-indep}
  Winner-pays-bid bid strategies are independent of the scale of the
  allocation rule, i.e., the bid strategies $\dstrat$ and
  $\dstrat\primed$ corresponding to $\ealloc$ and $\ealloc \primed$,
  defined as $\ealloc\primed(\zee) = \alpha\,\ealloc(\zee)$ for any
  $\alpha > 0$, are identical.
\end{fact}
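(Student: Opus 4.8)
The plan is to exploit the closed form of the winner-pays-bid bid strategy and observe that a common scale factor cancels out of a ratio. The paragraph preceding the statement already records the key algebraic fact: because the winner-pays-bid payment format gives $\epay(\val) = \dstrat(\val)\,\ealloc(\val)$, the bid strategy is the ratio $\dstrat(\val) = \epay(\val)/\ealloc(\val)$. So it suffices to track how the numerator $\epay$ and denominator $\ealloc$ transform under the scaling $\ealloc\primed(\zee) = \alpha\,\ealloc(\zee)$ and to check that the factor $\alpha$ cancels.

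First I would show that scaling the allocation rule scales the payment rule by the same factor. Under the dashboard convention $\epay(0) = 0$, the payment identity \eqref{eq:payment-identity} reads $\epay(\val) = \val\,\ealloc(\val) - \int_0^{\val} \ealloc(\zee)\,\dd\zee$, which is \emph{linear} in $\ealloc$. Substituting $\ealloc\primed(\zee) = \alpha\,\ealloc(\zee)$ and pulling the constant $\alpha$ out of both the product term and the integral gives $\epay\primed(\val) = \alpha\,\epay(\val)$. Then I would finish by forming the ratio: $\dstrat\primed(\val) = \epay\primed(\val)/\ealloc\primed(\val) = \bigl(\alpha\,\epay(\val)\bigr)/\bigl(\alpha\,\ealloc(\val)\bigr) = \epay(\val)/\ealloc(\val) = \dstrat(\val)$, so the two strategies coincide pointwise. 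Equivalently, one can substitute $\ealloc\primed = \alpha\,\ealloc$ directly into the closed form \eqref{eq:single-agent-strat} for $\dstrat$ and watch the $\alpha$ in the prefactor $1/\ealloc(\val)$ cancel against the $\alpha$ inside the integral.

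There is no genuine obstacle here; the statement is essentially a one-line cancellation. The only point that warrants a moment of care is the convention $\epay(0) = 0$: an additive transfer term $\epay(0)/\ealloc(\val)$ in \eqref{eq:single-agent-strat} would scale only if the transfer itself scaled with $\alpha$, and it is precisely the $\epay(0) = 0$ regime—the regime under discussion in this subsection—that removes any such unscaled term and makes $\epay$ exactly homogeneous of degree one in $\ealloc$. With that noted, the content of the fact is simply that scale cancels in the payment-to-allocation ratio that defines the winner-pays-bid bid strategy.
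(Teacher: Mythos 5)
Your proof is correct and matches the paper's own justification, which appears in the paragraph immediately preceding the fact: the winner-pays-bid bid strategy is the ratio $\dstrat(\zee) = \epay(\zee)/\ealloc(\zee)$, and scaling $\ealloc$ by $\alpha$ scales $\epay$ by the same factor (via the payment identity with $\epay(0)=0$), so the ratio is unchanged. Your added remark about why the $\epay(0)=0$ convention matters---an unscaled transfer term would break the homogeneity---is a correct and worthwhile clarification of the same argument, not a different route.
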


We define a family of allocation rules which admit simple linear bid
strategies.  An allocation rule in the family can be selected to
require either a very high bid or a very low bid, depending on whether
we need to rebalance a positive or negative balance.

\begin{definition}
  The $\gamma \in (0,1)$ {\em linear-bid allocation rule} is $\ealloc^{\gamma}(\zee) = \zee^{\nicefrac{\gamma}{1-\gamma}}$; 
  the linear-bid allocation rule that meets point
  $(\val,\alloc)$ is $\ealloc^{\gamma,\val,\alloc}(\zee) = [\nicefrac{\alloc}{\ealloc^{\gamma}(\val)}]\,\ealloc^{\gamma}(\zee)$.
\end{definition}

\begin{lemma}
  The bid strategy for the $\gamma$ linear-bid allocation
  rule $\ealloc^{\gamma,\val,\alloc}$ is $\dstrat^{\gamma,\val,\alloc}(\zee) = \dstrat^{\gamma}(\zee) = \gamma\,\zee$.
\end{lemma}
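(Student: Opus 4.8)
The plan is to compute the winner-pays-bid bid strategy for the base rule $\ealloc^{\gamma}$ directly from the left-hand (winner-pays-bid) case of equation~\eqref{eq:single-agent-strat}, and then reduce the point-meeting rule $\ealloc^{\gamma,\val,\alloc}$ to that base case by scale invariance (\Cref{f:scale-indep}). Throughout I use the standing convention $\epay(0) = 0$, so the transfer term in \eqref{eq:single-agent-strat} drops out and the bid strategy is
\[
\dstrat^{\gamma}(\zee) = \zee - \frac{1}{\ealloc^{\gamma}(\zee)}\int_0^{\zee} \ealloc^{\gamma}(s)\,\dd s.
\]

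The first step is the monomial integral. Writing $p = \gamma/(1-\gamma)$, the base rule is $\ealloc^{\gamma}(\zee) = \zee^{p}$, which is integrable at the origin precisely because $p > 0$ for $\gamma \in (0,1)$. Then $\int_0^{\zee} s^{p}\,\dd s = \zee^{p+1}/(p+1)$, and dividing by $\ealloc^{\gamma}(\zee) = \zee^{p}$ leaves $\zee/(p+1)$, so that $\dstrat^{\gamma}(\zee) = \zee - \zee/(p+1) = \zee\,(p/(p+1))$. The second step is the exponent algebra: substituting $p = \gamma/(1-\gamma)$ gives $p+1 = 1/(1-\gamma)$, hence $p/(p+1) = \gamma$ and $\dstrat^{\gamma}(\zee) = \gamma\,\zee$.

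Finally, the point-meeting rule is $\ealloc^{\gamma,\val,\alloc}(\zee) = \alpha\,\ealloc^{\gamma}(\zee)$ with the positive constant $\alpha = \alloc/\ealloc^{\gamma}(\val)$. By \Cref{f:scale-indep}, winner-pays-bid bid strategies are unchanged under positive scaling of the allocation rule, so $\dstrat^{\gamma,\val,\alloc} = \dstrat^{\gamma} = \gamma\,\zee$, as claimed. I expect no genuine obstacle: the whole argument reduces to the one-line monomial integral and the simplification $p/(p+1) = \gamma$, with the point-meeting case immediate from the already-established scale invariance. The only point worth recording is that $p > 0$, which is what makes the integral converge at the origin and makes $\ealloc^{\gamma}$ a legitimate (strictly increasing) allocation rule on $[0,1]$.
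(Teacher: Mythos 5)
Your proposal is correct and follows exactly the paper's route: the paper's proof likewise computes the bid function from equation~\eqref{eq:single-agent-strat} and invokes \Cref{f:scale-indep} to reduce the point-meeting rule $\ealloc^{\gamma,\val,\alloc}$ to the base rule $\ealloc^{\gamma}$. You merely write out the monomial integral and the simplification $p/(p+1)=\gamma$ that the paper leaves implicit, and both steps check out.
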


\begin{proof}
  The proof follows by calculation of the bid function according to
  equation~\eqref{eq:single-agent-strat}.  By \Cref{f:scale-indep}, it
  suffices to calculate the bid function for $\ealloc^{\gamma}$.
\end{proof}

The payment rebalancing dashboard is constructed from an original
dashboard by replacing its allocation to low values with an
appropriately scaled linear-bid allocation rule.

\begin{definition}
\label{d:rebalancing-dashboard-no-p0}
  The payment rebalancing dashboard $\ealloc\primed$ for outstanding balance $\cumbal$ and original dashboard $\ealloc$ is constructed as follows.\footnote{Notes: The desired $\val\primed$ exists as $\dstrat$ is continuously increasing but finite.  In the case where $\cumbal = 0$, the definition sets $\val\primed = 0$ and $\ealloc\primed = \ealloc$.}    
  \begin{enumerate}
  \item For non-negative balance $\cumbal \geq 0$, set $\gamma \to 1$,
    i.e., just below 1; and set $\val\primed$ to solve $\val\primed -
    \dstrat(\val\primed) = \cumbal$.
  \item For negative balance $\cumbal < 0$, set $\gamma \to 0$, i.e.,
    just above 0; and set $\val\primed$ to solve $-\dstrat(\val\primed) =
    \cumbal$.
  \item Set dashboard allocation rule $\ealloc\primed$ as:
  \begin{align*}
    \ealloc\primed(\zee) &=
    \begin{cases}
      \ealloc^{\gamma,\val\primed,\ealloc(\val\primed)}(\zee) & \text{for $\zee \in [0,\val\primed]$}\\
      \ealloc(\zee) &\text{for $\zee \in [\val\primed,\infty)$.}
    \end{cases}
  \end{align*}
  \end{enumerate}
 \end{definition}

Rebalancing dashboards constructed via
\Cref{d:rebalancing-dashboard-no-p0} satisfy the following properties.
\begin{enumerate}
\item \label{case:low}
  Low-valued agents with value $\val \leq \val\primed$ bid nothing for
  $\cumbal < 0$ or bid their full value for $\cumbal > 0$.
  
\item \label{case:high}
  High-valued agents with value $\val \geq \val\primed$ bid according
  to the rebalancing bid function $\dstrat\primed$ of
  \Cref{d:rebalancing-dashboard}.
\end{enumerate}

Our analysis of the outstanding balance for high-valued agents will be
identical to the previous discussion of the reference rebalancing
dashboard of \Cref{d:rebalancing-dashboard}.  Our analysis of the
outstanding balance for low-valued agents is based on the following
simple lemma which shows that either the balance is improved or the
magnitude of the outstanding balance is at most the inferred value
$\eval$, i.e., the same as the bound on the payment residual of
\Cref{l:maximum-imbalance}.

\begin{lemma}
  With outstanding balance $\cumbal$, dashboard $\ealloc$, allocation
  rule $\valloc$, and inferred value $\eval \leq \val\primed$ from \Cref{d:rebalancing-dashboard-no-p0}; the
  magnitude of the subsequent outstanding balance of the rebalancing dashboard
  mechanism is either reduced or at
  most $\eval$.
\end{lemma}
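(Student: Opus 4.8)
The plan is to analyze the one-stage update of the balance in each of the two regimes that \Cref{d:rebalancing-dashboard-no-p0} produces for a low-valued agent ($\eval \le \val\primed$), and to show that in both the balance is pushed toward zero by an amount of magnitude at most $\eval$. By construction of the rebalancing dashboard, such an agent bids its full value, $\bid = \eval$, when $\cumbal > 0$ (the regime $\gamma \to 1$), and bids nothing, $\bid = 0$, when $\cumbal < 0$ (the regime $\gamma \to 0$). In a stage where the agent is allocated, the winner-pays-bid balance update changes the balance by $\strat(\eval) - \bid$, where $\strat$ is the truthful bid strategy for the realized allocation rule $\valloc$; a stage where the agent is not allocated leaves the balance unchanged and is vacuous for the claim. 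Taking expectations only multiplies this change by $\valloc(\eval) \in [0,1]$ and hence shrinks its magnitude, so it suffices to argue about the per-allocation change.

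The single ingredient I would import is the elementary individual-rationality bound $0 \le \strat(\eval) \le \eval$ on the winner-pays-bid bid. This is immediate from the payment-identity geometry of \Cref{f:myerson} with $\vpay(0) = 0$; concretely, $\strat(\eval) = \eval - \valloc(\eval)^{-1}\int_0^{\eval}\valloc(\zee)\,\dd\zee$, and monotonicity of $\valloc$ gives $0 \le \int_0^{\eval}\valloc(\zee)\,\dd\zee \le \eval\,\valloc(\eval)$, whence $\strat(\eval) \in [0,\eval]$.

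Combining the two: for $\cumbal > 0$ the change is $\strat(\eval) - \eval \in [-\eval, 0]$, moving the balance toward zero; for $\cumbal < 0$ the change is $\strat(\eval) \in [0,\eval]$, again moving it toward zero. In either case the balance takes a step of size $\delta \le \eval$ in the direction of zero, and I would finish with the standard dichotomy: if $\delta \le |\cumbal|$ the new magnitude is $|\cumbal| - \delta \le |\cumbal|$, i.e.\ reduced; otherwise the update crosses zero and the new magnitude is $\delta - |\cumbal| < \delta \le \eval$. This is exactly the claimed alternative.

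I expect no real obstacle here; the only care needed is the sign bookkeeping — confirming that bidding the full value $\eval \ge \strat(\eval)$ overcollects relative to the truthful payment and hence shrinks a positive balance, while bidding nothing $0 \le \strat(\eval)$ undercollects and hence shrinks a negative balance. The conceptual point the lemma records is that clamping the bid to the two extreme individually-rational values $\{0,\eval\}$ caps the per-stage balance correction at $\eval$, mirroring the payment-residual bound of \Cref{l:maximum-imbalance}.
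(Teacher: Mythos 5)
Your proof is correct and follows essentially the same route as the paper's: both observe that the low-valued agent's dashboard bid is pinned to an extreme ($\eval$ for $\cumbal > 0$, $0$ for $\cumbal < 0$), invoke the individual-rationality bound $0 \leq \strat(\eval) \leq \eval$ on the truthful bid, and conclude via the same step-toward-zero dichotomy. The only differences are cosmetic — you spell out the negative-balance case and the $\valloc(\eval) \in [0,1]$ factor, which the paper dismisses as analogous and implicit, respectively.
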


\begin{proof}
  We prove the case of positive outstanding balance $\cumbal > 0$ (the
  negative balance case is analogous).  We aim to show that the
  outstanding balance after this stage is between $-\eval$ and
  $\cumbal$, i.e., the balance is reduced and if it becomes negative
  then its magnitude is at most $\eval$.

  For $\gamma \to 1$ the bid according to dashboard $\ealloc\primed$
  is $\dstrat\primed(\eval) \to \eval$.  This is the most the agent
  could bid with value $\eval$ so $\strat(\eval) - \dstrat\primed(\eval)$ is non-positive.  Thus, the balance is reduced.

  It remains to show that if the outstanding balance becomes negative, i.e.,
  we overshoot zero balance, that its magnitude is at most $\eval$.
  The largest value of $\setsize{\strat(\eval) - \dstrat\primed(\eval)}$ is
  $\eval$ which occurs when $\strat(\eval) = 0$.  Thus, the remaining outstanding
  balance is at least $\cumbal - \eval$ which, since we started
  with a positive balance $\cumbal > 0$, is at least $-\eval$.  
\end{proof}

To analyze the outstanding balance at stage $t$ to prove an analog to
\Cref{t:cumulative-imbalance}, we break the 
stages where the agent is allocated into three
cases:
\begin{enumerate}
\item \label{case:low-same-sign}
  The agent has a low inferred value and after the stage the outstanding
  balance does not change sign.
  
\item \label{case:low-change-sign}
  The agent has a low inferred value and after the stage the
  outstanding balance changes sign.

\item \label{case:high}
  The agent has a high inferred value.
\end{enumerate}

Our analysis will be simple.  In the first case, the magnitude of the outstanding
balance does not increase.  These stages can be ignored.  Combining
the stages where the other two cases hold, our previous analysis of
\Cref{t:cumulative-imbalance} can be applied.  To apply
\Cref{t:cumulative-imbalance} we need the balance resolved to be
within $[\cumbal\,\rebalrate,\cumbal]$ and the magnitude of the payment residual, i.e.,
new balance that is generated, to be at most $\eval$.  For stages that
satisfy the second case we view the payments as exactly balancing
$\cumbal$ and then generating a new payment residual with magnitude at most
$\eval$.  For stages of the third case, the assumptions of the theorem
are satisfied by the equivalence of the dashboard with the reference
rebalancing dashboard (\Cref{d:rebalancing-dashboard}) for the
inferred value $\eval$.  We conclude with the following theorem.

\begin{theorem}
\label{t:cumulative-imbalance-no-p0}
For any monotonic stage allocation rules $\valloc \super 1,\ldots,
\valloc \super t$, any monotonic dashboard rules $\ealloc \super 1,
\ldots, \ealloc \super t$ with probabilities supported on
$[\rebalrate,1]$, and any sequence of inferred values $\eval \super
1,\ldots,\eval \super t$ bounded by $\vmax$; the winner-pays-bid
rebalancing dashboard has outstanding balance at stage $t$ of at most
$\vmax / \rebalrate$; the rebalancing dashboard mechanism is $\vmax /
(\rebalrate t)$ incentive consistent.
\end{theorem}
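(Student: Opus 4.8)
The plan is to reduce \Cref{t:cumulative-imbalance-no-p0} to the already-proved \Cref{t:cumulative-imbalance} by partitioning the allocated stages and checking, stage by stage, the two hypotheses that drive the geometric-decay argument there: that the balance resolved lies in $[\cumbal\,\rebalrate,\cumbal]$ and that the freshly generated payment residual has magnitude at most $\eval \leq \vmax$. The complication, and the main obstacle, is that the dashboard of \Cref{d:rebalancing-dashboard-no-p0} does not resolve balance by the clean factor $\rebalrate/\ealloc(\eval)$ in the low-value regime: a low-valued agent with $\eval \leq \val\primed$ bids either nothing or her full value, so the recurrence underlying \Cref{t:cumulative-imbalance} does not apply verbatim and must be reconciled with this clamped behavior.

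First I would partition the allocated stages $\rebalstages$ into the three cases listed before the theorem statement: (i) low inferred value with no sign change of the balance, (ii) low inferred value with a sign change, and (iii) high inferred value $\eval \geq \val\primed$. Case~(i) is dispatched by the preceding lemma, which shows that at such a stage the magnitude of the outstanding balance does not increase; these stages only help the bound and can be dropped from the accounting. For the remaining stages I would argue that, taken together in their original order, they satisfy the hypotheses of \Cref{t:cumulative-imbalance}. In case~(iii) the dashboard coincides with the reference rebalancing dashboard of \Cref{d:rebalancing-dashboard} at the inferred value, so \Cref{l:remaining-imbalance} places the resolved balance in $[\indcumbal\,\rebalrate,\indcumbal]$ and \Cref{l:maximum-imbalance} bounds the new residual by $\eval$.

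The delicate step is case~(ii), where I would adopt the pessimistic bookkeeping indicated in the text: treat the sign-changing stage as first resolving the \emph{entire} outstanding balance $\cumbal$ (the upper endpoint, hence trivially within $[\cumbal\,\rebalrate,\cumbal]$) and then generating a fresh residual equal to the overshoot, whose magnitude the preceding lemma caps at $\eval \leq \vmax$. With both hypotheses now verified across cases~(ii)--(iii), \Cref{t:cumulative-imbalance} applies with per-stage residual bound $\maxbal \leq \vmax$ and yields outstanding balance at stage $t$ of at most $\vmax/\rebalrate$. Finally I would invoke \Cref{p:eps-ic}: writing this bound as $(\vmax/(\rebalrate\,t))\cdot t$ exhibits it as an $\epsilon t$ bound with $\epsilon = \vmax/(\rebalrate\,t)$, giving the claimed incentive consistency. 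I expect the only genuine care to be needed in reconciling case~(ii) with \Cref{t:cumulative-imbalance}'s resolution interval---confirming that charging the full $\cumbal$ as resolved respects its lower endpoint $\cumbal\,\rebalrate$ and that the overshoot is correctly reabsorbed as a residual of size at most $\eval$.
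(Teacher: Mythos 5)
Your proposal is correct and takes essentially the same route as the paper: the paper's own argument partitions the allocated stages into the identical three cases, ignores low-inferred-value stages without a sign change via the preceding lemma, views sign-changing stages as exactly resolving the full balance and then generating a fresh residual of magnitude at most $\eval \leq \vmax$, and handles high-value stages through the equivalence with the reference rebalancing dashboard of \Cref{d:rebalancing-dashboard} before applying \Cref{t:cumulative-imbalance} and \Cref{p:eps-ic}. Your careful check that charging the full $\cumbal$ as resolved sits at the upper endpoint of the interval $[\cumbal\,\rebalrate,\cumbal]$ is exactly the bookkeeping the paper intends.
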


\subsection{Practical Considerations}

In practical implementations of the rebalancing dashboard, it may be
undesirable for the dashboard to fluctuate significantly from stage to
stage as positive and negative balances are resolved.  One approach to
provide a dashboard that varies less across successive stages is to
allow the magnitude outstanding balance to be a small positive
multiple of $\eval$ and only resolve the outstanding balance when this
allowed magnitude is exceeded.  A detailed treatment of this approach
is given in the next section.

\section{Static Values: Natural Rebalancing}
\label{s:minimal-rebalancing}

In this section we assume as in \Cref{s:static} and
\Cref{s:rebalancing} that the functional form of the allocation
algorithm is available.  Like \Cref{s:static} the value of the agent
of our analysis will be assumed to be static; like
\Cref{s:rebalancing} the stage allocation rule will be assumed to be
dynamic.

The rebalancing dashboard of \Cref{s:rebalancing} allowed the
conversion of the dashboard corresponding to any strictly monotone
allocation rule into one where the outstanding balance, i.e., the
difference between payments in the sequential dashboard mechanism and
sequential truthful mechanism, in any dynamic environment is bounded.
On the other hand, following the dashboard for the inferred values
dashboard of \Cref{s:static} converges in two stages to Nash and the
subsequent payment residual is zero.  In the static-value
dynamic-allocation-rule environment, we identify dashboard with
naturally low outstanding balance for an agent with static value that
follows the dashboard.  Thus, on the follow-the-dashboard equilibrium
path, explicit rebalancing is only necessary when an agent's value
changes.

When combining the naturally rebalancing dashboards of this section with
the rebalancing dashboards of \Cref{s:rebalancing}, it is important
not to attempt to resolve the payment residual immediately.  For
example, permitting a small outstanding balance of, e.g., $2\vmax$, to
persist will allow the naturally rebalancing dashboard to balance
itself.  Of course, if the agent's value changes, the payment residual
can accumulate and will need to be resolved by explicit rebalancing.

\subsection{All-pay Dashboards}

The main result of this section is the observation that the all-pay
last-stage inferred-values dashboard (\Cref{d:last-stage}) does not
require any rebalancing for an agent with a static value, even when
the allocation algorithm or other agent values can be dynamic.

\begin{lemma} 
\label{l:last-stage-outstanding-balance}
For an agent with static per-stage value $\val$, the outstanding
balance of the all-pay last-stage inferred-values dashboard mechanism
at stage $t$ is at most $\val$.
\end{lemma}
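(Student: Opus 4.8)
The plan is to exploit the all-pay payment identity to show that the per-stage payments telescope, leaving only two boundary terms, each individually bounded by $\val$. First I would recall that in the all-pay format with transfer $\epay(0)=0$ the dashboard bid strategy from equation~\eqref{eq:single-agent-strat} has exactly the functional form of the truthful payment identity~\eqref{eq:payment-identity}, namely $\dstrat(\val) = \val\,\ealloc(\val) - \int_0^{\val} \ealloc(\zee)\,\dd\zee$ and $\vpay(\val) = \val\,\valloc(\val) - \int_0^{\val} \valloc(\zee)\,\dd\zee$. Since the last-stage dashboard (\Cref{d:last-stage} with $k=1$) sets the stage-$s$ dashboard allocation rule equal to the stage-$(s{-}1)$ realized single-agent allocation rule, i.e.\ $\ealloc\super{s} = \valloc\super{s-1}$, and since an agent following the dashboard with static value $\val$ is inferred to have value $\val$ in every stage, the stage-$s$ bid (and hence all-pay payment) equals the stage-$(s{-}1)$ truthful payment: $\dstrat\super{s}(\val) = \vpay\super{s-1}(\val)$ for every $s \geq 2$.

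Next I would compute the outstanding balance, the magnitude of $\sum_{s=1}^{t}\bigl[\dstrat\super{s}(\val) - \vpay\super{s}(\val)\bigr]$. Splitting off stage~$1$ and substituting $\dstrat\super{s}(\val) = \vpay\super{s-1}(\val)$ for $s \geq 2$ turns the remaining sum into the telescoping sum $\sum_{s=2}^{t}\bigl[\vpay\super{s-1}(\val) - \vpay\super{s}(\val)\bigr] = \vpay\super{1}(\val) - \vpay\super{t}(\val)$. The stage-$1$ truthful term then cancels, leaving the total difference in payments equal to $\dstrat\super{1}(\val) - \vpay\super{t}(\val)$.

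Finally I would bound this difference. Both $\dstrat\super{1}(\val)$, an all-pay bid under the (arbitrary) first-stage dashboard, and $\vpay\super{t}(\val)$, a truthful all-pay payment, are individually-rational payments for an agent of value $\val$, so each lies in $[0,\val]$; their difference therefore has magnitude at most $\val$, which is the claimed bound. The main subtlety is not any hard calculation but getting the indexing right: confirming that the last-stage dashboard really gives $\ealloc\super{s} = \valloc\super{s-1}$ and that this identification survives in a dynamic environment, since $\valloc\super{s-1}$ depends only on the other agents' inferred values recorded at the previous stage — exactly what the last-stage dashboard stores — so the telescoping goes through regardless of how the allocation rule changes across stages.
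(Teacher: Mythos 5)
Your proof is correct and is essentially the paper's own argument in explicit algebraic form: the paper's device of grouping the allocation of stage $s$ with the payment of stage $s+1$ (cyclically pairing stage $t$'s allocation with stage $1$'s payment) is exactly your telescoping identity $\dstrat\super{s}(\val) = \vpay\super{s-1}(\val)$, which leaves the boundary difference $\dstrat\super{1}(\val) - \vpay\super{t}(\val)$. Your final bound via individual rationality of both terms is also the same step the paper delegates to \Cref{l:maximum-imbalance}, so there is nothing to flag.
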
 

\begin{proof}
We will consider grouping the allocation of stage $s$ with the payment
in stage $s+1$ (with the allocation of stage $t$ paired with the
payment of stage $1$) and we will refer to this grouping as ``outcome
$s$.''  If the agent follows the dashboard, outcomes corresponding to $s
\in \{1,\ldots,t-1\}$ are the truthful outcomes for the allocation
rule of stage $s$ and value $\val$.  Specifically, the bid in stage
$s+1$ is equal to the payment according to the dashboard in stage
$s+1$ which is equal to the allocation rule of stage $s$.  Thus, the
payment is stage $s+1$ is the correct payment for the allocation rule
in stage $s$ and static value $\val$.  The imbalance of payments from
outcome $t$ (the allocation from stage $t$ and payment from stage 1),
by \Cref{l:maximum-imbalance} in \Cref{s:rebalancing}, is at most
$\val$.
\end{proof}

As is evident by the proof of \Cref{l:last-stage-outstanding-balance},
the payment residual in each stage can be large, but because of the
relationship between the dashboard and allocation algorithm in
successive stages the overall outstanding balance remains small.

\subsection{Winner-pays-bid Dashboards}

A similar approach admits a winner-pays-bid dashboard that does not
require any rebalancing for an agent with a static value, even when
the allocation algorithm or other agent values can be dynamic.

As a general principle, winner-pays-bid dashboard mechanisms tend work
better when stages where an agent is not allocated are ignored.
While, ignoring these stages when constructing the dashboard may make
estimates of allocation rules less responsive, considering these
stages in dynamic environments leaves open the possibility that the
allocation rules for rounds where the agent is not allocated are
systematically distinct from allocation rules in rounds where the
ageht is allocated.  This distinction can result in systematic
differences between the dashboard and the allocation rule in stages
where the agent is allocated and can result in large differences
between the total payments in the sequential dashboard mechanism and
the sequential truthful mechanism.  The following dashboard is
constructed based on only allocation rules where the agent wins.

\begin{definition}
  The {\em last-winning-stage inferred-values dashboard} in stage $t$
  corresponds to the allocation rule from the latest stage $s < t$
  where the agent was allocated by the mechanism.
\end{definition}

\begin{lemma} 
\label{l:last-winning-stage-outstanding-balance}
For an agent with static per-stage value $\val$, the outstanding
balance of the winner-pays-bid last-winning-stage inferred-values dashboard mechanism at stage $t$ is at most $\val$.
\end{lemma}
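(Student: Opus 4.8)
The plan is to adapt the grouping argument used for the all-pay case in \Cref{l:last-stage-outstanding-balance} to the winner-pays-bid format, where payments are collected only in stages in which the agent wins and the dashboard is refreshed only from such stages. Throughout I fix the agent at its static value $\val$; since following the dashboard is self-consistent, the inferred value equals $\val$ in every stage, the realized bid is $\dstrat \super s(\val)$, and I track the realized balance pathwise as in the analysis of \Cref{t:cumulative-imbalance}. Its per-stage increment is $[\strat \super s(\val) - \dstrat \super s(\val)]\,\indalloc \super s$, the increment used in the balance update preceding \Cref{l:remaining-imbalance}.

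First I would discard the non-winning stages: their realized allocation $\indalloc \super s$ is zero, so they contribute nothing to the balance and, by the definition of the last-winning-stage dashboard, also leave the dashboard unchanged. It therefore suffices to consider the allocated stages, indexed in increasing order as $\rebalstages = \{s_1 < \cdots < s_{\numrebalstages}\}$, on each of which $\indalloc \super s = 1$.

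The key structural fact is that, for every $k \geq 2$, the dashboard in the winning stage $s_k$ corresponds by definition to the allocation rule $\valloc \super{s_{k-1}}$ of the previous winning stage, so its bid strategy coincides with the correct winner-pays-bid bid strategy for $\valloc \super{s_{k-1}}$, i.e.\ $\dstrat \super{s_k}(\val) = \strat \super{s_{k-1}}(\val)$. Substituting this identity into the sum of the per-stage increments over the allocated stages, the sum telescopes: every interior term cancels, leaving the two boundary terms, namely the correct payment $\strat \super{s_{\numrebalstages}}(\val)$ of the last winning stage (whose matching dashboard payment would fall in a win beyond stage $t$) and the dashboard payment $\dstrat \super{s_1}(\val)$ of the first winning stage (governed by the arbitrary initial dashboard, with no earlier win to match). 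Hence the realized balance equals $\strat \super{s_{\numrebalstages}}(\val) - \dstrat \super{s_1}(\val)$.

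To finish, I would observe that both $\strat \super{s_{\numrebalstages}}(\val)$ and $\dstrat \super{s_1}(\val)$ are winner-pays-bid bids of an agent of value $\val$ and therefore lie in $[0,\val]$ by individual rationality, so their difference has magnitude at most $\val$ --- the same bound and reasoning as the single-stage residual of \Cref{l:maximum-imbalance}. As this holds in every realization of the win pattern, passing to the expected balance preserves it and yields the claimed bound of $\val$. The hard part will be the bookkeeping around the randomness of the win pattern: one must confirm that each winning stage's dashboard references the immediately preceding win (so the telescoping is exact across the interleaved non-winning stages) and that controlling the realized balance pathwise legitimately bounds the expected payment difference that defines the outstanding balance.
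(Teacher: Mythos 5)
Your proposal is correct and follows essentially the same route as the paper's own proof: index the winning stages, use the identity $\dstrat \super {s_k} = \strat \super {s_{k-1}}$ implied by the last-winning-stage dashboard so that the payment difference telescopes to $\strat \super {s_0}(\val) - \strat \super {s_{\numrebalstages}}(\val)$ (the paper absorbs your boundary term $\dstrat \super {s_1}$ by the convention $\valloc \super {s_0} = \ealloc \super {s_1}$), and bound this difference of two individually rational bids by $\val$. The bookkeeping you flag as the hard part is handled in the paper exactly as you anticipate — the telescoping is taken over the allocated stages only, and the pathwise bound holds for every realization of the win pattern, hence in expectation.
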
 

\begin{proof}
Denote the stages that the agent is allocated, indexed in increasing
order, by $\rebalstages = \{s_1,\ldots,s_{\numrebalstages}\}$.  For
convenience denote by $s_0 = 0$ and denote by $\valloc \super {s_0} =
\ealloc \super {s_1}$ the initial dashboard.  (This initial dashboard
can be any arbitrary monotonic allocation rule.)

The total payments made by the agent are $\sum_{k=1}^{\numrebalstages}
\dstrat \super {s_k}(\val)$.  Here $\dstrat \super {s_k}$ is the bid
strategy used in stage $s_k$ with the dashboard for $\ealloc \super {s_k}$.
The bid strategy corresponding to the stage allocation rule $\valloc \super {s_k}$ is
$\strat \super {s_k}$; thus, the desired total payments are
$\sum_{k=1}^{\numrebalstages} \strat \super {s_k} (\val)$.

For the last-winning-stage inferred-values dashboard $\ealloc \super
{s_k} = \valloc \super {s_{k-1}}$; thus, $\dstrat \super {s_k} =
\strat \super {s_{k-1}}$ Thus, the difference in the total payments
made and total payments desired is
\begin{align*}
  \sum\nolimits_{k=1}^{\numrebalstages}
  \dstrat \super {s_k}(\val)
  -   \sum\nolimits_{k=1}^{\numrebalstages}
  \strat \super {s_k}(\val)
  = \strat \super {s_0} (\val) - \strat \super {s_{\numrebalstages}}(\val).
\end{align*}
The magnitude of the difference between any two bids of an agent with value $\val$ is at most
$\val$.  The lemma follows.
\end{proof}

\section{Single-call Dashboards}
\label{s:blackbox}

In this section we generalize the construction of dashboards and their
analyses to the practically realistic case that the principal has only
single-call access to the allocation algorithm.  For example, given a
profile of values, the principal can draw a single sample from the
distribution of the algorithm's outcomes.  Such a dashboard is
necessary for online marketplaces where the allocations of the
algorithm are endogenous to the behavior of one side of the market.
E.g., in ad auctions where the mechanism is designed for the
advertisers and the users show up and make decisions that realize the
stochasticity of the allocation.  The single-call perspective of this
section allows viewing the allocation algorithm as a map from the
values of the agents to how agents are prioritized in the marketplace
where stochastic outcomes are obtained.

The single-call model of mechanism design is one where the designer
has an algorithm and aims to design a mechanism that implements the
allocation of the algorithm in equilibrium.  The designer, however,
can only call the algorithm once.  The single-call model is
significant in that the standard methods for calculating payments in
mechanisms, e.g., in the Vickrey-Clarke-Groves or unbiased payment
mechanism \citep{AT-01}, require $n+1$ blackbox calls to the
algorithm.  \citet{BKS-10} showed that truthful single-call mechanisms
exist.  These mechanisms randomly perturb the allocation algorithm and
use these perturbations to compute the correct payments for the
perturbed allocation algorithm.

There are two challenges to single-call implementation of dashboards.
First, for the valuation profile input into the algorithm, only the
realized allocation in $\indallocs \in \{0,1\}^n$ is observed.  The allocation
probabilities are not observed.  Second, the principal does not have
counterfactual access to the allocation algorithm that we have
previously used to determine reasonable dashboards.  Recall, a
dashboard is a prediction of the bid allocation rule faced by a agent.
This prediction requires knowledge, e.g., for agent $i$ of
$\epvalloci(z,\valsmi)$ for all $z$.

Both of these challenges are solved by instrumenting the allocation
algorithm.  Specifically, we instrument the allocation algorithm with
uniform exploration, i.e., with probably $\insrate$ independently for
each agent we enter a uniform random value rather than the agent's
value.  This instrumentation degrades the quality of the allocation
algorithm by $\insrate$.
First, this uniform instrumentation can be viewed as a randomized
controlled experiment for estimating the counterfactual allocation
rule as is necessary for giving the agent a dashboard that estimates
her bid-allocation rule.  Second, the uniform exploration enables
implicitly calculating unbiased truthful payments for the
realized mechanism \citep[cf.][]{BKS-10}.  The difference between
these desired payments and the actual payments, i.e., the agent's bid
if she wins, can then be added to the outstanding balance in the payment
rebalancing dashboard of \Cref{s:rebalancing}.  

\begin{definition}
\label{d:instrumented-scf}
The instrumented allocation algorithm $\insallocs$ for allocation
algorithm $\epvallocs$, instrumentation rate $\insrate \in (0,1)$,
valuation range $[0,\vmax]$, and input valuation profile $\vals$ is:
\begin{enumerate}
\item For each agent $i$, sample 
\begin{align*}
\indvali & \sim \begin{cases}
  \vali            & \text{with probability $1-\insrate$,}\\
 U[0,\vmax]  & \text{with probability $\insrate$.}
\end{cases}
\end{align*}
\item Define $\insallocs(\vals) = \expect[\indvals]{\epvallocs(\indvals)}$ and sample $\indinsallocs \sim \insallocs(\vals)$, i.e., run $\epvallocs$ on $\indvals$.
\item For each agent $i$, set instrumentation variables
\begin{align*}
\indalloci &= \indinsalloci\, \Ind{\indvali = \vali}, 
& \indbelowi &= \indinsalloci \, \tfrac{\vmax}{\vali} \, \Ind{\indvali < \vali}, 
&\indinspayi &= \vali \, \big( \indalloci - \tfrac{1-\insrate}{\insrate}\,\indbelowi\big).
\end{align*}
\end{enumerate}
The truthful payment functions for $\insallocs$ are
denoted by $\inspays$, via the payment identity~\eqref{eq:payment-identity}.
\end{definition}

\begin{theorem}\label{thm:unbiased} 
The instrumentation payment variables $\indinspays$ are unbiased
estimators for the incentive compatible payments $\inspays(\vals)$
for the instrumented allocation algorithm $\insallocs$, i.e.,
$\expect{\indinspays} = \inspays(\vals)$.
\end{theorem}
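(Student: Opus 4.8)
The plan is to prove the componentwise identity $\expect{\indinspayi} = \inspayi(\vals)$ for a fixed agent $i$, holding the other agents' true values $\valsmi$ fixed throughout. I would write $g_i(z) = \expect[\indvalsmi]{\epvalloci(z,\indvalsmi)}$ for the probability that agent $i$ is served when its own input to $\epvallocs$ is held at the value $z$ while the remaining agents are perturbed; equivalently, $g_i(z)$ is the conditional mean of the realized allocation $\indinsalloci$ given $\indvali = z$. Since the instrumentation replaces $\vali$ by an independent draw $u \sim U[0,\vmax]$ with probability $\insrate$, the single-agent projection of the instrumented rule splits into a value-dependent and a constant part,
\begin{align*}
\insalloci(z,\valsmi) = (1-\insrate)\,g_i(z) + \insrate\,\ig_i, \qquad \ig_i = \tfrac{1}{\vmax}\int_0^{\vmax} g_i(u)\,\dd u .
\end{align*}

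The first key step is to observe that the constant term $\insrate\,\ig_i$ is annihilated by the payment identity. Writing the truthful payment for $\insallocs$ as $\inspayi(\vals) = \vali\,\insalloci(\vals) - \int_0^{\vali}\insalloci(z,\valsmi)\,\dd z$ via \eqref{eq:payment-identity} with $\inspayi(0)=0$, the constant contributes $\vali\,(\insrate\,\ig_i) - \int_0^{\vali}\insrate\,\ig_i\,\dd z = 0$, so
\begin{align*}
\inspayi(\vals) = (1-\insrate)\Big[\vali\,g_i(\vali) - \int_0^{\vali} g_i(z)\,\dd z\Big].
\end{align*}
This exhibits the target as exactly the surplus-minus-utility decomposition of \Cref{f:myerson} applied to $(1-\insrate)\,g_i$, and it reduces the theorem to matching the surplus term with $\vali\,\expect{\indalloci}$ and the integral term with $\vali\,\tfrac{1-\insrate}{\insrate}\,\expect{\indbelowi}$.

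The remaining work is two expectation calculations, each carried out by iterating over the perturbation and then the Bernoulli realization of the allocation. For $\indalloci = \indinsalloci\,\Ind{\indvali=\vali}$, the indicator vanishes whenever agent $i$ is perturbed, so only the probability-$(1-\insrate)$ event survives; there $\indvali = \vali$ and the conditional mean of $\indinsalloci$ is $g_i(\vali)$, giving $\expect{\indalloci} = (1-\insrate)\,g_i(\vali)$. For $\indbelowi = \indinsalloci\,\tfrac{\vmax}{\vali}\,\Ind{\indvali<\vali}$, the indicator instead forces a perturbation with uniform draw $u \in [0,\vali)$; integrating the conditional mean $g_i(u)$ against the density $\tfrac{1}{\vmax}$ over that range yields $\expect{\indinsalloci\,\Ind{\indvali<\vali}} = \tfrac{\insrate}{\vmax}\int_0^{\vali}g_i(u)\,\dd u$, hence $\expect{\indbelowi} = \tfrac{\insrate}{\vali}\int_0^{\vali}g_i(u)\,\dd u$. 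Substituting both into $\indinspayi = \vali\big(\indalloci - \tfrac{1-\insrate}{\insrate}\,\indbelowi\big)$ and taking expectations collapses to $(1-\insrate)\big[\vali\,g_i(\vali) - \int_0^{\vali}g_i(u)\,\dd u\big]$, which is precisely $\inspayi(\vals)$.

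The main obstacle is purely the bookkeeping of the two nested sources of randomness -- the per-agent perturbation and the realized $0/1$ outcome -- and recognizing up front that the uniform ``baseline'' allocation enters the projection only as an additive constant and therefore cancels in the payment identity. Once that cancellation is seen, the two estimators $\indalloci$ and $\indbelowi$ are transparently engineered to reproduce the surplus and the integral terms, and the indicator conditioning together with the uniform-density normalization are routine.
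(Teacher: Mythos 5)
Your proof is correct and follows essentially the same route as the paper's: decompose by whether agent $i$ was perturbed, observe that the perturbed branch contributes an allocation constant in $\vali$ whose payment vanishes under the payment identity, and match $\expect{\indalloci}$ and $\expect{\indbelowi}$ to the surplus and integral terms. The only difference is cosmetic: the paper carries out the identical calculation conditioned on $\indvalsmi$ and then takes unconditional expectations, whereas you average over $\indvalsmi$ up front via $g_i(z) = \expect[\indvalsmi]{\epvalloci(z,\indvalsmi)}$.
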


\begin{proof}
We break the payment $\inspayi(\vals)$ for agent $i$ into two parts,
the part where $\indvali = \vali$ and the part where $\indvali \neq
\vali$.  In the latter part the payment identity requires zero
payment.  In the former, which happens with probability $1-\insrate$,
the payment conditioned on $\indvalsmi$ is $\epvpayi(\vali,\indvalsmi)
= \vali\, \epvalloci(\vali,\indvalsmi) - \int_0^{\vali}
\epvalloci(z,\indvalsmi)\,dz$.  The expected payment conditioned on
$\indvalsmi$ (but not conditioning on $\indvali = \vali$) is $(1-\insrate)\,
\epvpayi(\vali,\indvalsmi)$.  

Now evaluate $\expect{\indinspayi \given \indvalsmi}$ as defined in
\Cref{d:instrumented-scf} as follows:
\begin{align*}
\expect{\indinspayi \given \indvalsmi} &=
  \vali \, \expect{\indalloci \given \indvalsmi} - \vali\,\tfrac{1-\insrate}{\insrate}\, \expect{\indbelowi \given \indvalsmi}\\
&= \vali\, (1-\insrate)\, \epvalloci(\vali,\indvalsmi)
  - \vali\, \tfrac{1-\insrate}{\insrate}\,\tfrac{\insrate}{\vmax}\,\tfrac{\vmax}{\vali} \,\int_0^{\vali} \epvalloci(z,\indvalsmi)\, dz\\
&= (1-\insrate) \, \epvpayi(\vali,\indvalsmi). 
\end{align*}

Since the expected payments conditioned on $\indvalsmi$ are equal, so are the
unconditional expected payments.
\end{proof}

Instrumentation allows the mechanism to construct a consistent
estimator for the realized allocation rule that is otherwise unknown.
This estimator can be used to construct a dashboard.

\begin{definition}\label{d:empirical-dashboard}
For stage $t \geq t_0$, the {\em instrumented
  dashboard} is the profile of single-agent bid allocation rules
$\dallocs \super {t}$ defined by $\dalloci \super {t}$ for agent $i$ as follows:
\begin{enumerate}
\item[0.] Let $\evali \super s$ denote the agent's inferred value in stage $s \in \{1, \ldots, t-1\}$.
\item Define allocation data set corresponding to the uniform instrumentation $\{(\indvali \super s, \indinsalloci \super s) : s \in \{1,\ldots,t-1\} \wedge \indvali \neq \evali \super s \}$.
\item Define the empirical average allocation as $\empavgi \super t$ as the
  average allocation of this data set.
\item Define the empirical allocation rule as $\empalloci \super t : \reals \to [0,1]$ as a continuous isotonic regression of the allocation data set.
\item Define the instrumented allocation rule as $\ealloci \super t = (1-\insrate)\,\empalloci \super t(\val) + \insrate\, \empavgi \super t$.
\item The instrumented dashboard for agent $i$ is the
  bid-allocation rule $\dalloci \super t$ that corresponds to
  $\ealloci \super t$ via
  equation~\eqref{eq:single-agent-bid-allocation-rule}.
\end{enumerate}
\end{definition}

Standard methods for isotonic regression can be used to estimate the
empirical allocation rule in \Cref{d:empirical-dashboard}.  Errors in
the regression will be resolved by the rebalancing approach of
\Cref{s:rebalancing}.  A key required property, however, is that
dashboard is continuous.  Thus, isotonic regressions that result in
continuous functions should be preferred.  Approaches to isotonic
regression in statistics, including smoothing and penalization (see
\citealp{ramsay:88}, \citealp{mammen:91}, \citealp{kakade:11})
guarantee continuity of the resulting estimator for the regression
function.  The ironing procedure common in Bayesian mechanism design
is an isotonic regression; however, it results in discontinuous
functions and is, thus, inappropriate for constructing dashboards.

The analysis below focuses on the winner-pays-bid rebalancing
instrumented dashboard for the instrumented allocation algorithm.
(Recall that all-pay mechanisms have deterministic payments and, thus,
the budget imbalance generated at a given round can be resolved
deterministically.  We omit the simple analysis.)  With
winner-pays-bid mechanisms the payments are only made when the agent
is allocated.  As a result, the payment residual and balanced resolved
are stochastic and necessitate a more sophisticated analysis.

The rebalancing approach in \Cref{s:rebalancing} used the functional
form of the allocation algorithm to calculate the difference between
actual payments given by the payment format and incentive compatible
payments given by the payment identity.  This payment residual is added to a
balance and each stage a portion of the balance is added to the
payment of a type with zero value when determining the dashboard.  Our
approach here is to instead use the implicit payments which, by
\Cref{thm:unbiased}, are unbiased estimators of the incentive
compatible payments.  The three terms in the balance update formula
below are the previous balance, the implicit payment of the current
stage, and the actual payment of the current stage.

\begin{definition}
From stage $s$ with outstanding balance $\indcumbal \super s$, agent
bid $\bid \super s$, realized instrumented allocation $\indinsalloc
\super s$, and realized implicit payment $\indinspay \super s$; the stage $s+1$
outstanding balance in the winner-pays-bid instrumented rebalancing
dashboard is:
$$
\indcumbal \super {s+1}=\indcumbal \super s +
\indinspay \super s - \indinsalloc \super s\, \bid \super s.
$$
\end{definition}

We first confirm that in expectation the analysis of
\Cref{s:rebalancing} holds.  Recall from
equation~\eqref{eq:rebalancing} that the bid strategy for the payment
rebalancing dashboard for allocation rule $\ealloc$ is
$\strat\primed(\val) = \strat(\val) + \indcumbal\, \rebalrate /
\ealloc(\val)$.  Thus the actual payment in the balance update formula
can be split for inferred value $\eval$ satisfying $\bid =
\strat\primed(\eval)$ giving a balance update as
\begin{align*}
\indinspay - \indinsalloc \, \bid &= \indinspay - \indinsalloc \, \strat(\eval) - \indinsalloc\,\indcumbal \, \rebalrate / {\ealloc(\eval)}.
\\
\intertext{Define the {\em payment residual} (cf. \Cref{d:payment-residual}) as the first two terms and the {\em resolved balance} as the last term:}
\indnewbal &= \indinspay - \indinsalloc \, \strat(\eval), &
\indrebal &= \indinsalloc\,\indcumbal \, \rebalrate/ {\ealloc (\eval)}.\\
\intertext{Thus, the total change to the cumulative balance is $\indnewbal - \indrebal$.  Importantly the expected payment residual $\expect{\indnewbal}$ matches \Cref{d:payment-residual}.  Taking expectations and applying \Cref{thm:unbiased} 
we have,}
\expect{\indnewbal} & =
\inspay(\eval) - \insvalloc(\eval) \, \strat(\eval) 
\end{align*}
Here $\inspay$ is the payment rule for $\insvalloc$ and payment
$\ealloc(\eval)\, \strat(\eval)$ satisfies the payment identity for
$\ealloc$.  Thus, if the dashboard is correct, i.e., $\ealloc =
\insvalloc$, then the expected payment residual $\expect{\indnewbal}$
is zero.  When we have incorrect estimates, the extent to which the
difference of the first terms is not zero is gives a payment residual
that must be rebalanced in the future.



Our analysis starts with three observations:
\begin{itemize}
\item When $\indinsalloc = 0$ the payment residual and amount
  rebalanced are zero, i.e., $\indnewbal = \indrebal = 0$; otherwise:
\item the size of the range of $\indnewbal$ is $\eval/\insrate$ (equal
  to the size of the range of $\indinspay$ which equals $\eval\,[-
    (1-\insrate) / \insrate, 1] $ for inferred value $\eval$);
  and
\item the amount rebalanced is $\indcumbal \rebalrate /
  \ealloc(\eval)$ for inferred value $\eval$.
\end{itemize}
Our analysis proceeds like that of \Cref{s:rebalancing}.  We consider
the $\numrebalstages$ stages $s$ where the agent is allocated, i.e.,
$\rebalstages = \{s : \indinsalloc \super s \neq 0\}$.  These are the
stages where $\indnewbal \super s$ and $\indrebal \super s$ can be
non-zero.  We consider the amount of the payment residual $\indnewbal
\super s$ from stage $s \in \rebalstages$ that remains at final stage
$t$ when, at each subsequent stage in $\rebalstages$, a fraction of that
payment residual is resolved.
\Cref{l:remaining-imbalance} shows that the balance resolved in each
stage $s \in \rebalstages$ is between $\cumbal \super s\, \rebalrate$
and $\cumbal \super s$.  The magnitude of the payment residual is
upper bounded by $\eval / \insrate$ in the lemma below.

\begin{lemma}
\label{l:single-call-maximum-imbalance}
In a stage in which the agent's inferred value is $\eval$, the
single-call winner-pays-bid rebalancing dashboard for instrumented
allocation algorithm with instrumentation probability $\insrate$ has
payment residual with magnitude $|\indnewbal|$ at most
$\eval/\insrate$.
\end{lemma}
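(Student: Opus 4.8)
The plan is to read the bound straight off the three observations that immediately precede the statement, using only the elementary fact that a quantity whose interval of attainable values contains $0$ cannot exceed the width of that interval in magnitude. Concretely, I would fix a stage with inferred value $\eval$ and recall that the single-call payment residual is $\indnewbal = \indinspay - \indinsalloc\,\strat(\eval)$. By the first observation, on every realization in which the agent is not allocated we have $\indinsalloc = 0$, hence $\indinspay = 0$ and therefore $\indnewbal = 0$. Thus $0$ is an attainable value of $\indnewbal$, so it lies in the interval $[a,b]$ spanned by the possible values of $\indnewbal$.

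Next I would invoke the second observation to pin down the width of that interval. Subtracting the constant $\strat(\eval)$ only on the allocated branch is a uniform translation of the allocated values of $\indinspay$, so it leaves the width of the spanning interval unchanged; hence $\indnewbal$ and $\indinspay$ have equal range. By enumerating the outcomes of $\indinspay$ in \Cref{d:instrumented-scf} (the true value used and allocated gives $\indinspay = \eval$; a uniform draw strictly below $\eval$ that is allocated gives the negative extreme; all remaining outcomes give $0$), the range of $\indinspay$ is the interval $\eval\,[-(1-\insrate)/\insrate,\,1]$, whose width is $\eval/\insrate$. This enumeration, resting on \Cref{thm:unbiased}, is the only genuinely nonroutine input; everything else is bookkeeping. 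Combining, the range of $\indnewbal$ is some interval $[a,b]$ with $0 \in [a,b]$ and $b - a = \eval/\insrate$.

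Finally, from $a \le 0 \le b$ together with $b - a = \eval/\insrate$ I get $b \le \eval/\insrate$ (because $a \le 0$) and $-a \le \eval/\insrate$ (because $b \ge 0$), so $[a,b] \subseteq [-\eval/\insrate,\,\eval/\insrate]$ and hence $\setsize{\indnewbal} \le \eval/\insrate$ on every realization, as required. The main obstacle is conceptual rather than computational: the temptation is to bound the positive and negative extremes of $\indnewbal$ separately, which loses roughly a factor of two and entangles the $\vmax$-scaling of the uniform-exploration branch; the clean route is instead to notice that the same event that zeroes the actual payment (no allocation) also zeroes the implicit payment, so $0$ is pinned inside a range of width exactly $\eval/\insrate$. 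The one side condition worth checking along the way is individual rationality, $\strat(\eval) \le \eval$, which guarantees that the translated positive endpoint $\eval - \strat(\eval)$ is nonnegative so that $0$ really does sit between the endpoints.
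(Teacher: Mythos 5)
Your proof is correct and is essentially the paper's argument in different packaging: both rest on the same three ingredients --- the residual vanishes when the agent is not allocated, $\indinspay$ ranges over $\eval\,[-(1-\insrate)/\insrate,\,1]$, and individual rationality $\strat(\eval)\le\eval$ --- with your interval-width observation (a range of width $\eval/\insrate$ containing $0$) replacing the paper's direct bound $|\indnewbal|\le \tfrac{1-\insrate}{\insrate}\,\eval+\strat(\eval)\le \eval/\insrate$ on the allocated branch. One aside: the factor-of-two loss you attribute to bounding the two extremes separately does not actually occur, since under individual rationality the positive extreme $\eval-\strat(\eval)\le\eval$ and the negative extreme $\tfrac{1-\insrate}{\insrate}\,\eval+\strat(\eval)\le\eval/\insrate$ are each already within $\eval/\insrate$, which is precisely how the paper proceeds.
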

\begin{proof}
When $\indalloc = 0$ the payment residual is zero.  The payment
residual when $\indalloc =1$ is $\indnewbal = \indinspay -
\strat(\eval)$ with $\indinspay = \eval\, (\indalloc -
\tfrac{1-\insrate}{\insrate}\, \indbelow)$.  The magnitude $|\indnewbal|
\leq \tfrac{1-\insrate}{\insrate} \eval + \strat(\eval)$.  Since the
winner-pays-bid strategy (without the addition for rebalancing) is
individually rational $\strat(\eval) \leq \eval$ and, thus, $|\indnewbal|
\leq \eval / \insrate$.
\end{proof}

We are ready now to apply \Cref{t:cumulative-imbalance} to the
rebalancing dashboard for the instrumented allocation algorithm.  A
helpful property of the instrumented allocation algorithm
(\Cref{d:instrumented-scf}) is that the instrumentation implies that
the allocation probabilities for all agents are bounded away from zero,
i.e., the minimum allocation probability for an agent $i$ in the
dashboard is $\insrate$ times the average allocation probability for
$i$ with value $\indvali \sim U[0,\vmax]$ in the induced allocation
algorithm.  We need to set $\rebalrate$ to lower bound this quantity.

\begin{corollary}
\label{c:single-call-cumulative-imbalance}
For the single-call winner-pays-bid rebalancing dashboard for
instrumented allocation algorithm with rebalancing rate $\rebalrate$
(set appropriately), instrumentation parameter $\insrate$, and values
in $[0,\vmax]$; the outstanding balance at stage $t$ is at most $\vmax /
(\insrate\,\rebalrate)$.
\end{corollary}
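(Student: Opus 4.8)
The plan is to obtain this corollary as a direct instantiation of the winner-pays-bid rebalancing bound of \Cref{t:cumulative-imbalance}, whose two hypotheses---that the dashboard allocation probabilities are supported on $[\rebalrate,1]$, and that the per-stage payment residual is bounded by some $\maxbal$---I would verify separately using the structure of the instrumented dashboard and \Cref{l:single-call-maximum-imbalance}, respectively.

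First I would check the support condition. By \Cref{d:empirical-dashboard} the instrumented allocation rule is $\ealloc \super t(\val) = (1-\insrate)\,\empalloc \super t(\val) + \insrate\,\empavg \super t$, so the uniform-exploration term contributes an additive floor of $\insrate\,\empavg \super t$ to every allocation probability. Hence the minimum allocation probability of the dashboard is at least $\insrate$ times the average allocation probability under a uniformly drawn value---exactly the quantity flagged in the paragraph preceding the corollary. ``Setting $\rebalrate$ appropriately'' then means choosing $\rebalrate$ no larger than this floor, which guarantees that $\ealloc \super s$ is supported on $[\rebalrate,1]$ in every stage. This is precisely the condition \Cref{l:remaining-imbalance} requires so that the balance resolved in each allocated stage lies in $[\indcumbal \super s\,\rebalrate,\indcumbal \super s]$.

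Second I would bound the payment residual. By \Cref{l:single-call-maximum-imbalance} the magnitude of the stage-$s$ payment residual is at most $\eval \super s/\insrate$, and since inferred values lie in $[0,\vmax]$ this is at most $\vmax/\insrate$; taking $\maxbal = \vmax/\insrate$ fulfils the second hypothesis. With both hypotheses in hand, the remaining step is the geometric-decay argument from the proof of \Cref{t:cumulative-imbalance}: indexing the allocated stages $\rebalstages$ in decreasing order, the residual generated in the $k$-th most recent allocated stage has been damped by at least $(1-\rebalrate)^k$ by stage $t$, so summing $\maxbal\sum_{k=0}^\infty(1-\rebalrate)^k = \maxbal/\rebalrate$ bounds the outstanding balance by $(\vmax/\insrate)/\rebalrate = \vmax/(\insrate\,\rebalrate)$.

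The one genuine subtlety---and the step I would treat most carefully---is that in the single-call setting the balance $\indcumbal \super s$ is a random variable, so the recursion must be controlled pathwise rather than merely in expectation. Fortunately \Cref{l:remaining-imbalance} is stated ``under any agent strategy'' and bounds the realized resolved balance, while \Cref{l:single-call-maximum-imbalance} bounds the realized payment residual; hence the worst-case bound $|\indcumbal \super {s+1}| \leq (1-\rebalrate)\,|\indcumbal \super s| + \vmax/\insrate$ holds on every sample path. Since the proof of \Cref{t:cumulative-imbalance} uses only these two per-stage facts, its conclusion transfers verbatim and the geometric sum bounds the outstanding balance deterministically, not just its expectation.
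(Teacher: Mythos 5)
Your proposal is correct and follows exactly the route the paper intends for this corollary: take $\maxbal = \vmax/\insrate$ from \Cref{l:single-call-maximum-imbalance}, use the instrumentation floor $\insrate\,\empavg$ on the dashboard's allocation probabilities to justify ``setting $\rebalrate$ appropriately'' so that the support hypothesis of \Cref{t:cumulative-imbalance} holds, and then invoke that theorem's geometric-decay bound to get $(\vmax/\insrate)/\rebalrate$. Your closing observation that the per-stage facts hold pathwise, so the bound is deterministic rather than in expectation, is a worthwhile explicit verification of what the paper only asserts in the remark following the corollary (``it holds always'').
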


Note that this bound is on the realized total balance and is not a
high-probability or in-expectation result; it holds always.  The
payment residual at any stage is a random variable with expected
magnitude at most $\vmax$ and range at most $\vmax/\insrate$.  The
expected payment residual, however, has magnitude at most $\vmax$
(\Cref{l:maximum-imbalance}).  \Cref{t:cumulative-imbalance} implies
that the expected outstanding balance at time $t$ is at most $\vmax /
\rebalrate$.  Incentive consistency is defined in expectation over
randomization in the mechanism and strategies.  We have the following
corollary.  As before, holding $\vmax/\rebalrate$ constant, the
incentive inconsistency vanishes with $t$.

\begin{corollary}
\label{c:single-call-incentive-consistency}
For the single-call winner-pays-bid rebalancing dashboard for
instrumented allocation algorithm with rebalancing rate $\rebalrate$
(set appropriately), instrumentation parameter $\insrate$, and values
in $[0,\vmax]$; over $t$ stages the dashboard mechanism is
$\vmax/(\rebalrate\, t)$ incentive inconsistent for all strategies.
\end{corollary}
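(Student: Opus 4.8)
The plan is to obtain the corollary by combining an \emph{expected} outstanding-balance bound of $\vmax/\rebalrate$ with the incentive-consistency criterion of \Cref{p:eps-ic}, in its in-expectation form. The first point to emphasize is that the realized balance bound of \Cref{c:single-call-cumulative-imbalance}, namely $\vmax/(\insrate\,\rebalrate)$, is \emph{not} the quantity that governs the result: incentive consistency (\Cref{d:incentive-consistent}, \Cref{d:strategic-equiv}) compares per-stage-average utilities, and in the single-call model these utilities are taken in expectation over the instrumentation. Hence the target $\epsilon = \vmax/(\rebalrate\,t)$, a factor $\insrate$ smaller than what the realized bound would give, must come from the expected balance, which is where the saving lives.

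The core estimate to establish is therefore $\expect{|\indcumbal \super t|} \le \vmax/\rebalrate$. I would set this up exactly as in \Cref{s:rebalancing}, tracking conditional expectations of the balance-update recurrence $\indcumbal \super {s+1} = \indcumbal \super s + \indnewbal \super s - \indrebal \super s$. Two inputs from the single-call construction feed into \Cref{t:cumulative-imbalance}. First, by \Cref{thm:unbiased} the implicit payment $\indinspay \super s$ is unbiased for the incentive-compatible payment, so the conditionally expected per-stage payment residual equals the functional-form residual $\newbal \super s$ of \Cref{d:payment-residual}, whose magnitude is at most $\vmax$ by \Cref{l:maximum-imbalance}; crucially this replaces the realized range $\vmax/\insrate$ of \Cref{l:single-call-maximum-imbalance} by $\vmax$. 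Second, the instrumentation forces each dashboard allocation probability into $[\rebalrate,1]$, with $\rebalrate$ chosen to lower-bound $\insrate$ times the induced average allocation, so in every winning stage a fraction $\rebalrate/\ealloc \super s(\eval \super s) \in [\rebalrate,1]$ of the outstanding balance is resolved, exactly as required by \Cref{l:remaining-imbalance}. Feeding $\maxbal = \vmax$ into \Cref{t:cumulative-imbalance} and summing the geometric series $\sum_k (1-\rebalrate)^k$ then yields the bound $\vmax/\rebalrate$.

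With the expected-balance bound in hand, the corollary follows quickly. The dashboard mechanism and the sequential truthful mechanism for the instrumented algorithm $\insallocs \super s$ realize the same allocations stage by stage, so they differ only in payments; since utilities are linear in payments, the expected difference in any agent's total utility over the $t$ stages equals the expected outstanding balance, at most $\vmax/\rebalrate$. Dividing by $t$ bounds the per-stage-average expected utility difference by $\vmax/(\rebalrate\,t)$, which is exactly the in-expectation form of \Cref{p:eps-ic} with $\epsilon\,t = \vmax/\rebalrate$, i.e.\ the claimed $\epsilon = \vmax/(\rebalrate\,t)$ in the sense of \Cref{d:incentive-consistent}.

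I expect the main obstacle to be the rigorous passage from the stochastic balance recurrence to the expected-balance bound without letting the $1/\insrate$ range of the realized residual (\Cref{l:single-call-maximum-imbalance}) leak back in. The difficulty is that the resolved balance $\indrebal \super s$ multiplies the \emph{current}, random balance $\indcumbal \super s$ by the win-dependent factor $\indinsalloc \super s\,\rebalrate/\ealloc \super s(\eval \super s)$, so plain linearity of expectation does not reduce the recurrence to the deterministic form assumed by \Cref{t:cumulative-imbalance}. The route I would take is to restrict attention to the random set of winning stages $\rebalstages$, where the resolved fraction is deterministically $\rebalrate/\ealloc \super s(\eval \super s)\in[\rebalrate,1]$, unroll the recurrence into its telescoping form, and control the surviving contribution of each stage's residual through its signed conditional expectation $\newbal \super s$ (magnitude at most $\vmax$) rather than its realized magnitude. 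Getting this conditioning right, so that it is $\vmax$ and not $\vmax/\insrate$ that is geometrically summed, is the crux; the realized bound $\vmax/(\insrate\,\rebalrate)$ of \Cref{c:single-call-cumulative-imbalance} remains the correct almost-sure statement alongside it.
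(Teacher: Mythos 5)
Your proposal follows the paper's own route essentially verbatim: the paper justifies this corollary in the paragraph preceding it by noting (via \Cref{thm:unbiased} and \Cref{l:maximum-imbalance}) that the \emph{expected} payment residual has magnitude at most $\vmax$ rather than the realized range $\vmax/\insrate$, feeding $\maxbal = \vmax$ into \Cref{t:cumulative-imbalance} to bound the expected outstanding balance by $\vmax/\rebalrate$, and then invoking the in-expectation reading of incentive consistency together with \Cref{p:eps-ic} --- precisely your argument, including your correct observation that the realized bound of \Cref{c:single-call-cumulative-imbalance} is not the operative quantity. The conditioning subtlety you flag as the crux (that $\indrebal \super s$ multiplies the random current balance by a win-dependent factor, so \Cref{t:cumulative-imbalance} does not apply in expectation by bare linearity) is real, but the paper simply asserts the step without addressing it, so your proposed resolution via the unrolled recurrence over winning stages makes your write-up, if anything, more careful than the paper's.
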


The bound of \Cref{c:single-call-cumulative-imbalance} can be improved
via the Chernoff-Hoeffding inequality.  We show that the outstanding
balance at time $t$ is the weighted average of these payment residuals
with weights that are geometrically decreasing.  This results in the
following high-probability bound.

\begin{theorem}
\label{t:cumulative-imbalance:stochastic}
For the single-call winner-pays-bid rebalancing dashboard for
instrumented allocation algorithm with rebalancing rate $\rebalrate$
(set appropriately), instrumentation parameter $\insrate$, and values
in $[0,\vmax]$; the outstanding balance at stage $t$ is at most $\vmax
/ \rebalrate +
\frac{\vmax}{\insrate}\sqrt{\frac{1}{2\eta}\log\left(\frac{2}{\delta}\right)}$
with probability at least $1-\delta$.
\end{theorem}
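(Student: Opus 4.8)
The plan is to write the stage-$t$ outstanding balance as an explicit weighted sum of the per-stage payment residuals and then separate the (deterministically bounded) contribution of their conditional means from the (concentrating) contribution of their deviations. First I would unroll the balance recursion $\indcumbal \super {s+1} = \indcumbal \super s + \indnewbal \super s - \indrebal \super s$. Writing the resolved fraction as $\alpha_s = \indinsalloc \super s\,\rebalrate/\ealloc \super s(\eval \super s)$, the recursion becomes $\indcumbal \super {s+1} = (1-\alpha_s)\,\indcumbal \super s + \indnewbal \super s$, and unrolling from $\indcumbal \super 1 = 0$ gives $\indcumbal \super t = \sum_{s \in \rebalstages} w_s\,\indnewbal \super s$ with $w_s = \prod_{r \in \rebalstages,\, r>s}(1-\alpha_r)$. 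By \Cref{l:remaining-imbalance} each factor $1-\alpha_r$ lies in $[0,1-\rebalrate]$, so if $s$ is the $k$-th most recent allocated stage then $w_s \le (1-\rebalrate)^k$; this is the geometric decay promised in the discussion preceding the theorem.

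Next I would center each residual. Letting $\mathcal{G}_{s-1}$ denote the history before the stage-$s$ randomness, \Cref{thm:unbiased} gives $\expect{\indnewbal \super s \mid \mathcal{G}_{s-1}} = \newbal \super s(\eval \super s) = \inspay(\eval \super s) - \insvalloc(\eval \super s)\,\strat \super s(\eval \super s)$, whose magnitude is at most $\vmax$ by \Cref{l:maximum-imbalance}, while \Cref{l:single-call-maximum-imbalance} bounds the range of $\indnewbal \super s$ by $\vmax/\insrate$. Writing $\indnewbal \super s = \newbal \super s(\eval \super s) + Y_s$ with $\expect{Y_s \mid \mathcal{G}_{s-1}} = 0$, the mean part satisfies $\bigl|\sum_{s} w_s\,\newbal \super s(\eval \super s)\bigr| \le \vmax\sum_{k\ge 0}(1-\rebalrate)^k = \vmax/\rebalrate$ for every realization, recovering the in-expectation bound of \Cref{c:single-call-cumulative-imbalance}. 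It remains to control the deviation term $\sum_s w_s\,Y_s$.

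For the deviation term I would invoke the Chernoff--Hoeffding (Azuma) inequality for the centered, weight-scaled residuals $w_s Y_s$, each of range at most $w_s\,\vmax/\insrate$. The variance proxy is $\sum_s (w_s\,\vmax/\insrate)^2 \le (\vmax/\insrate)^2\sum_s w_s^2 \le (\vmax/\insrate)^2\sum_{k\ge0}(1-\rebalrate)^k = (\vmax/\insrate)^2/\rebalrate$, using $w_s^2 \le w_s \le (1-\rebalrate)^k$. The inequality then yields $\prob{|\sum_s w_s Y_s| \ge \lambda} \le 2\exp(-2\lambda^2\rebalrate\insrate^2/\vmax^2)$; setting the right-hand side to $\delta$ gives $\lambda = \tfrac{\vmax}{\insrate}\sqrt{\tfrac{1}{2\rebalrate}\log(2/\delta)}$. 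Combining with the deterministic mean bound via the triangle inequality gives outstanding balance at most $\vmax/\rebalrate + \tfrac{\vmax}{\insrate}\sqrt{\tfrac{1}{2\rebalrate}\log(2/\delta)}$ with probability at least $1-\delta$, as claimed.

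The hard part will be legitimizing the martingale structure of $\sum_s w_s Y_s$, because the weight $w_s$ depends on the resolved fractions $\alpha_r$ of strictly later stages, and those depend on the learned dashboards $\ealloc \super r$, which are estimated from data that include stage $s$; hence $w_s$ is not predictable and need not be independent of $Y_s$ given the past. I would resolve this by passing to the Doob martingale $Z_s = \expect{\sum_r w_r Y_r \mid \mathcal{G}_s}$ and bounding its increments using the \emph{deterministic} geometric weight bound $(1-\rebalrate)^k$ together with the range $\vmax/\insrate$, so that the variance-proxy estimate above still applies. An alternative is to first condition on the realized allocation pattern, which fixes every $w_s$ and renders the residuals conditionally independent; but one must then verify that this conditioning preserves the mean-zero property, and since the unbiasedness of \Cref{thm:unbiased} is stated conditionally on the past rather than on the allocation outcome, the Doob-martingale route is the cleaner one.
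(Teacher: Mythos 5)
Your proposal is correct and takes essentially the same route as the paper: the paper likewise unrolls the balance recursion into a geometrically weighted sum of centered residuals with per-term range $(1-\rebalrate)^k\,\vmax/\insrate$, applies Chernoff--Hoeffding with variance proxy $\sum_{k\ge 0}(1-\rebalrate)^{2k}\,\vmax^2/\insrate^2 \le \vmax^2/(\insrate^2\rebalrate)$, and adds the expected-balance bound $\vmax/\rebalrate$ inherited from \Cref{t:cumulative-imbalance}. Your closing concern about the weights $w_s$ depending on later-stage dashboards is well taken but goes beyond rather than against the paper, whose proof simply conditions on the allocation sequence $\indinsalloc \super 1,\ldots,\indinsalloc \super t$ and uses the deterministic bound $(1-\rebalrate)^k$ on each weight without addressing this measurability subtlety; your Doob-martingale patch (or your pathwise bound $\bigl|\sum_s w_s\,\newbal \super s(\eval \super s)\bigr| \le \vmax/\rebalrate$, which is stronger than the paper's in-expectation version) would only tighten the argument.
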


\begin{proof}
By \Cref{l:single-call-maximum-imbalance} for each stage $s$,
$|\indnewbal \super s| \leq \vmax/\insrate.$ At the same time
$\expect{\indnewbal \super s} \leq \vmax$ and $\indnewbal \super s=0$
whenever $\indinsalloc \super s=0$. As in the proof of
\Cref{t:cumulative-imbalance}, consider the stages $\rebalstages =
 \{s_0,\ldots,s_{\numrebalstages-1}\}$ with non-zero payments indexed
in decreasing order.
Then
$$
\indcumbal \super t - \expect{\indcumbal \super t}=\sum\limits^{\numrebalstages}_{k=0}\prod_{l=0}^{k-1}\left(1-
\rebalrate/\ealloci\super {s_l} (\eval \super {s_l})
\right)\left(\indnewbal \super {s_k} - \expect{\indnewbal \super {s_k}}\right),
$$
where the range of each term in the sum is bounded by $(1-\eta)^k\,\vmax /\insrate.$ 

Then we can apply the Chernoff-Hoeffding inequality
to evaluate
\begin{align*}
    \prob{
    \left|\indcumbal \super t - \expect{\indcumbal \super t}\right| > \xi \,\Big|\, \indinsalloc \super 1,\ldots, \indinsalloc \super t
    } \leq 2 \exp\left(
    -\frac{2 \xi^2}{\sum\nolimits^{\infty}_{l=0}
    (1-\rebalrate)^{2l}\vmax^2 /\insrate^2}
    \right) \leq 2 \exp \left(-\frac{2\xi^2 \insrate^2 \rebalrate}{\vmax^2}
    \right)
\end{align*}
which follows from 
$
\left(\sum\nolimits^{\numrebalstages}_{l=0}
    (1-\rebalrate)^{2l}\right)^{-1} \geq \left(\sum\nolimits^{\infty}_{l=0}
    (1-\rebalrate)^{2l}\right)^{-1} =1-(1-\rebalrate)^2 \geq \rebalrate ,
$
since $0<\rebalrate<1$ and $\rebalrate \geq \rebalrate^2.$ 

Then with probability $\delta$ the imbalance 
of empirical dashboard with instrumentation can deviate from the expectation 
no further than 
$
\frac{\vmax}{\insrate}\sqrt{\frac{1}{
2\eta}\log\left(\frac{2}{\delta}\right)}.
$
\end{proof}

\bibliographystyle{apalike}
\bibliography{bib}





\end{document}